%% file: main.tex
\documentclass[a4paper]{article}
\usepackage[a4paper, margin=2.54cm]{geometry}

\usepackage{physics}
\usepackage{amssymb,amsthm}
\usepackage[linesnumbered,lined,boxed,noend]{algorithm2e}
\usepackage{hyperref}
\usepackage{comment}
\usepackage{pgfplots}
\pgfplotsset{compat=1.9}
\usepackage{float}
\usepackage{lineno}

\makeatletter
\renewenvironment{algomathdisplay}
 {\[}
 {\@endalgocfline\vspace{-\baselineskip}\]\;}
\makeatother
\newcommand{\refAlgoStep}[1]{{\footnotesize\textsc{\textbf{\ref{#1}}}}}
\SetKwFunction{SBPEval}{SBPEval}
\SetKwFunction{ClassicalAdd}{ClassicalAdd}
\SetKwFunction{ClassicalProd}{ClassicalProd}
\SetKwFunction{Add}{Add}
\SetKwFunction{LinearCombination}{LinearCombination}
\SetKwFunction{MultiplyNoOffset}{MultiplyNoOffset}
\SetKwFunction{Multiply}{Multiply}
\DontPrintSemicolon
\SetKwInOut{Input}{input}\SetKwInOut{Output}{output}
\SetKwProg{Fn}{Function}{:}{end}

\theoremstyle{definition}
\newtheorem{defn}{Definition}[section]
\newtheorem{example}[defn]{Example}
\theoremstyle{plain}
\newtheorem{prop}[defn]{Proposition}
\newtheorem{coroll}[defn]{Corollary}
\theoremstyle{remark}
\newtheorem{remark}[defn]{Remark}

\usepgfplotslibrary{groupplots}
\usetikzlibrary{positioning}
\usetikzlibrary{external}
\tikzsetexternalprefix{compiled-figures/}

\usepackage{pgffor}
\usepackage{xcolor}

\DeclareMathOperator{\sgn}{sgn}
\DeclareMathOperator{\round}{rnd}
\DeclareMathOperator{\card}{card}

\floatstyle{boxed}
\newfloat{Formula box}{tbp}{lop}

\usepackage{setspace}
\date{}

\begin{document}

\title{Hybrid quantum floating-point method for sharp arithmetic}

\author{Gabriele Agliardi$^{*,1,2}$ and Enrico Prati$^{\dag,3,4}$\\[8mm]
{\footnotesize\parbox{.8\textwidth}{
$^*$ gabriele.agliardi@it.ibm.com $^\dag$ enrico.prati@unimi.it\\[2mm]
$^1$ IBM Quantum, IBM Research, Via Paolo Nanni Costa 30, I--40133 Bologna, Italy\\
$^2$ At the time of paper conception, also at Dipartimento di Fisica, Politecnico di Milano, Piazza Leonardo da Vinci I--20133 Milano, Italy\\
$^3$ Dipartimento di Fisica “Aldo Pontremoli”, Università degli Studi di Milano, Via Celoria 16, I–20133 Milano, Italy\\
$^4$ Istituto di Fotonica e Nanotecnologie, Consiglio Nazionale delle Ricerche, Piazza Leonardo da Vinci 32, I–20133 Milano, Italy\\
}}
}
\maketitle

\begin{abstract}
There are several possible ways to encode random variables in a quantum state.
The basis encoding of bit strings has paramount importance because it allows to load the values of a random variable through the superposition of corresponding basis states, and to then exploit quantum parallelism in processing algorithms. The basis encoding offers a natural way to represent an unsigned integer random variable, and extends to signed integers, as well as to fixed-point and floating-point variables.
Each quantum representation of fractional numbers, however, involves a trade-off between accuracy and depth of manipulation circuits. 
Here, an efficient hybrid quantum-classical representation of quantum floating points is introduced. 
It combines a quantum register containing the values, with a classical register storing global information about the variable, namely the range and approximation tolerances. 
The sum and product operations are defined, in such a way as to ensure they are performed without overflow. 
By taking advantage of the stored classical information, the precision degradation that occurs due to rounding after repeated data manipulations, can be significantly reduced compared to known strategies. 
Ad hoc examples show up to around $90\%$ reduction in approximation, compared to previous techniques, after repeated additions.
The method finds application in many algorithms of practical relevance and constitutes a significant advance in the design of arithmetic circuits with low depth and high accuracy.
\end{abstract}
\vspace{2pc}
\noindent{\it Keywords}: quantum arithmetic, quantum floating-point registers

\section{Introduction}
The precision of a floating point variable degrades significantly as a result of repeated arithmetic manipulations, when the register size is fixed and a no-overflow requirement is guaranteed on all data points.
Here we investigate error propagation on quantum floating point numbers when applying summation and multiplication. We propose a representation for fractional numbers, which can reduce such errors by keeping track of global properties through classical registers.
The relevance of floating-point quantum arithmetic is fundamental in many application fields.
Indeed, by mimicking classical arithmetic \cite{rieffel_quantum_2011, nielsen_quantum_2010}, quantum floating-point calculations can evaluate virtually any function \cite{haner_optimizing_2018}, thus allowing quantum computers to tackle data transformations of arbitrary complexity, as long as the depth and width of the circuit are supported by the hardware.
In finance, for example, quantum floating point numbers are needed to represent complex payoffs in option pricing~\cite{doriguello_quantum_2022}, as opposed to single-time and piecewise linear payoffs that can be approximated by native quantum amplitude operations \cite{woerner_quantum_2019,maronese2024quantum}.

Previously, we studied the efficient loading of multivariate distributions by quantum generative adversarial networks \cite{agliardi2022optimal, agliardi2022optimized}. There, the technique finds application in the data preparation that occurs prior to the data manipulation discussed in the present manuscript. Next, we have isolated the data encoding process as a separate abstraction layer, to improve the design of quantum circuits~\cite{agliardi2025quantum}. 
In turn, data processing enabled by hybrid quantum floating-point arithmetic can be used to compute functions for quantum integration via quantum amplitude estimation~\cite{agliardi2022quantum}.

Most efforts in the quantum arithmetic community are concerned with designing efficient circuits to perform quantum modular integer addition and multiplication \cite{vedral_quantum_1996, draper_addition_2000, cuccaro_new_2004, thapliyal_design_2013, nguyen_resource-efficient_2014, ruiz-perez_quantum_2017, haner_optimizing_2018, seidel_efficient_2022}, which also provide the building blocks of real number arithmetic \cite{haener_quantum_2018, seidel_efficient_2022}. Borrowing from the classical case, real numbers can be represented in the quantum context by either fixed-point \cite{yang_quantum_2022} or floating-point \cite{haener_quantum_2018}. 

Recently, Seidel et al.~\cite{seidel_efficient_2022} introduced a variant of the floating-point representation, called \textit{mono-quantum coding}, in which only the mantissa is stored in a quantum register while the exponent is classical, and they claim that this strategy allows for significant savings in depth and width. Mono-quantum coding is specifically suitable if all data values stored in superposition in the register share a similar scale, so that applying the same exponent to all data points does not imply a significant approximation.

This assumption often holds in practice, since a given register is typically dedicated to homogeneous quantities. Mono-quantum coding can be seen as a first example of a classically enriched register, since some global properties of the data set (here the exponent), are stored classically to improve performance under certain conditions - in this case their similar scale.

Here we further develop the idea of using classical side registers, showing how other global information can be stored classically to reduce error propagation when multiple arithmetic operations are performed.
We employ our representation to store a random variable, where the random outcomes are represented as basis states, and their probabilities as squared amplitudes. In this setting, one can load a target random variable into a quantum computer \textit{approximately}, in consequence of the limited number of available qubits, and therefore the hybrid register can track the maximal gap -- henceforth called tolerance, between the target random variable and its representation by the quantum state.
Our hybrid representation stores the global scaling factor -- which is applied to all random outcomes, the global offset, and the tolerances, respectively, in a classical register. The hybrid variables are then equipped with a sum and product operation, which are ensured to be performed without overflow, by appropriately adjusting the scale of the output in accordance with the number of available qubits in the output register. 
Furthermore, when performing additions or multiplications, tolerance propagation is taken into account. A tolerance region is therefore provided, in which the target sum or product lies, around the approximate result represented by the quantum state.
The encoding is benchmarked against previous techniques, among which Ref.~\cite{seidel_efficient_2022} provides the most comparable technique. With ad-hoc examples, we show
a reduction in the size of tolerance windows up to about $90\%$ obtained after repeated applications of addition.

The Results section introduces the concept of \textit{quantum variables}, that provides a useful abstraction for data representations based on basis encodings. Using the formalism of quantum variables, we then propose our hybrid encoding, called \textit{classically-enriched floating point variables} (CEFV). We show how to use CEFVs to represent random variables. We consistently define both addition and multiplication, and we discuss the main parameters that influence them. Next, we evaluate the advantages of the proposed method, compared to the state of the art.
The Discussion section contains a summary of the implications and an outlook on future work.
The Methods section provides insight into how the addition and multiplication algorithms work.
The formal definition of quantum variables, the detailed algorithms, the correctness proofs, as well as some examples and extended results, are collected in the Supplementary Information~(SI) materials.

\section{Results}

\subsection{Definition of quantum variable and classically-enriched floating point variable}
In classical computer science, a \textit{variable} is a memory cell equipped with a \textit{type}, such as unsigned or signed integer, or floating point. The type creates a mapping between binary strings contained in the cell and values in a target domain. In the case of a quantum register, we can similarly define a \textit{quantum variable} as a quantum register of $n$ qubits, equipped with an encoding function $g$ which induces a mapping fomr the basis states, labelled as $\ket{0}, ..., \ket{2^n-1}$, to the values in a given \textit{domain} $\mathcal D$. For instance, when $g$ is the identity, the commonly used unsigned integer quantum variables are obtained. Similarly, when $g$ is the complement function of the two, signed integers are produced. One can also easily introduce fixed-point or floating-point numbers.

In the following, we address how to benefit from mappings generated by encoding functions $g$ that depend on some classical parameters. In particular, such parameters can be modified throughout the computation, so that operations on quantum variables can be obtained by altering the classical parameters, by applying quantum circuits to the registers, or both. Specifically, we introduce a new quantum variable type called Classically-Enriched Floating Point Variable (CEFV), which is initially defined by an affine transformation with an offset $a \in \mathbb R$ and a scaling factor $b \in \mathbb R \setminus \{0\}$. The encoding function $g$ maps the basis state $\ket{z}$ to the value $x = a+bz \in \mathcal{D}_{a,b}^{\mathrm{fp}} = \{ a+bz \; | \;  z =0, ..., 2^n-1\}$, see Fig.~\ref{fig:cefv}. The definition of CEFV is completed with two additional tolerance parameters $\epsilon^\pm$ discussed in the next subsection.

Being defined on basis states, the quantum variables induced by whatever mapping based on an encoding function $g$, share the powerful ability to perform operations in \textit{quantum parallel}. Namely, if an algorithm is able to perform (say) the sum of two quantum variables on basis states, then the operator extends by linearity on combinations of basis states, and computes the sum of all the basis states involved in such combination \textit{in a single application}. 
The possibility of quantum parallelism is a fundamental difference compared to classical computation. In fact, quantum parallelism underlies many algorithm designs~\cite{nielsen_quantum_2010}.

\begin{figure}
\small\centering
\fbox{
\includegraphics{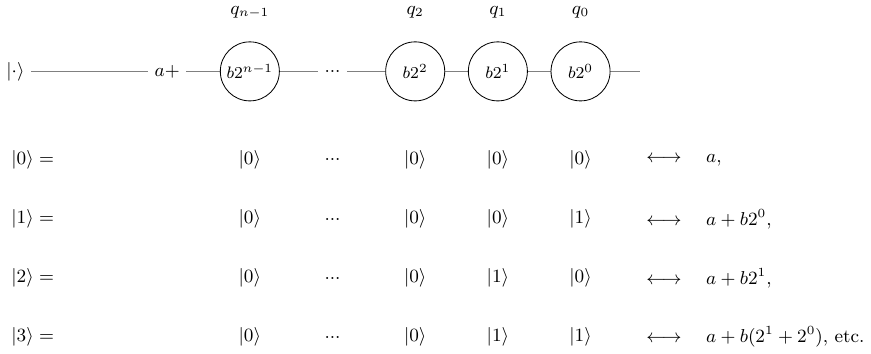}
}
\caption{The top row contains a visual representation of a CEFV. In the rows below, some examples of the floating-point number represented by the CEFV, when the quantum state belongs to the computational basis. States in the computational basis follow the usual bit string decomposition, thus providing a representation of any $x \in \mathcal{D}_{a,b}^{\mathrm{fp}}$.}\label{fig:cefv}
\end{figure}

\subsection{Encoding of a random variable by a quantum variable}
Given a quantum variable $\mathcal F$ and letting $\{\ket{z}\}_{z=0}^{2^n-1}$ be the basis states of its register, it is natural to associate each of its states $\ket{\psi}$ with the discrete random variable $X_{\ket{\psi}}$ valued in $\{g(z)\}_{z=0}^{2^n-1}$, with the probabilities $p_{g(z)} = \abs{\braket{\psi}{z}}^2$, as the probability of measuring $z$ in the quantum register coincides with $p_{g(z)}$.

Vice versa, a random variable $X$ can be encoded in a quantum state of a quantum variable. Then, it is possible to manipulate the entire random variable in quantum parallel. If the domain of the random variable is compatible with that of the quantum variable, the encoding is straightforward: $\ket{\psi_X} = \sum_{x \in \mathcal D} \sqrt{f_X(x)} \ket{g(x)}$, where $f_X(x)$ is the probability of $X=x$. When the domain of the random variable is not the same, and is potentially continuous, it is possible to encode the random variable approximately.
Consequently, we equip CEFVs with two additional tolerance parameters $\epsilon^\pm$. A random variable $Y$ is said \textit{$\mathcal{F}$-compatible with the state $\ket{\psi}$} if $- \epsilon^{-} \leq Y-X_{\ket{\psi}} \leq \epsilon^+$.

Similarly, given two possibly dependent random variables $X_1$ and $X_2$, compatible with the domain of two quantum variables $\mathcal F_1$ and $\mathcal F_2$ respectively, one can represent them as a state based on the two quantum registers, $\ket{\psi_{(X_1, X_2)}}_{1,2} = \sum_{x_1 \in \mathcal D_1, x_2 \in \mathcal D_2} \sqrt{f_{X_1, X_2}(x_1, x_2)} \ket{g_1 (x_1)} \ket{g_2 (x_2)}$, where $f_{X_1, X_2}$ is the joint probability. 
Note that the dependence between the random variables is equivalent to the state being entangled across the two registers. One can introduce tolerances also for the multi-variable case, more specifically by defining that a random variable $Y_1$ is $\mathcal{F}_1$-compatible with a state $\ket{\psi}_{1,2}$, living in a superset of the qubits of the register of $\mathcal F_1$.
All the concepts above are more formally treated in Section~S1 of the~SI. Fig.~S1 visually represents how a random variable is approximated by a quantum variable.

\subsection{Arithmetic operations on CEFVs and consistency with approximate random variables}
The usefulness of the framework above is that it results consistent when implementing both the addition and the multiplication operations. The addition algorithm for CEFVs takes two CEFVs $\mathcal{F}_1$ and $\mathcal{F}_2$\footnote{%
    Here and throughout the work, notations $a_\star, b_\star,$ etc. are implicitly referred to $\mathcal{F}_\star$. For instance, $b_1$ is the factor of $\mathcal{F}_1$. %
} and one scaling factor $b_\mathrm{lead}$ as inputs (see Methods).
It outputs an offset $a_\mathrm{out}$, a scaling factor $b_\mathrm{out}$, and the tolerances $\epsilon_\mathrm{out}^\pm$, thus defining the CEFV $\mathcal F_\mathrm{out}$. It also outputs a quantum circuit $U_\mathrm{add}$, operating on the registers of $\mathcal F_1$, $\mathcal F_2$, and $\mathcal F_\mathrm{out}$, as well as on an auxiliary register $\ket{\cdot}_\mathrm{aux}$.
For any input state $\ket{\psi_\mathrm{in}}_{1,2}$ defined over the registers of $\mathcal F_1$ and $\mathcal F_2$, if $\ket{\psi_\mathrm{in}}_{1,2}$ is $\mathcal F_i$-compatible with two random variables $Y_i$ ($i=1,2$), then $U_\mathrm{add} \ket{\psi_\mathrm{in}}_{1,2} \ket{0}_\mathrm{out} \ket{0}_\mathrm{aux} = \ket{\psi_\mathrm{out}}_{1,2, \mathrm{out}} \ket{0}_\mathrm{aux}$, and the output state $\ket{\psi_\mathrm{out}}_{1,2, \mathrm{out}}$ keeps being $\mathcal F_i$-compatible with two random variables $Y_i$ ($i=1,2$), and is also $\mathcal{F}_\mathrm{out}$-compatible with $Y_\mathrm{out} = Y_1+Y_2$.
Moreover, the output $b_\mathrm{out}$ is $b_\mathrm{lead}$, modulo some `qubit shifting' that allows the optimal exploitation of all the qubits in the output register. More precisely, $b_\mathrm{out} / b_\mathrm{lead}$ is a power of 2, appropriately chosen to minimize the tolerances $\epsilon^\pm_\mathrm{out}$ in consideration of the available qubits in the output register, while ensuring no overflow.

Similarly, the multiplication algorithm with the same properties is given: it takes two input CEFVs and a reference scaling factor, and provides an output CEFV and a quantum operator performing the product. Again, the output scaling factor differs from the reference scaling factor by a power of 2, and it is chosen to best exploit the available qubits, without overflow. The similarity between addition and multiplication is due to the fact that both algorithms are grounded on the evaluation of semi-boolean polynomials, as discussed in the Methods.

\subsection{Choice of the output scaling factor in the sum}

\begin{figure}
    \input{Figures/blead-summary-small.tex}
\caption{Effect of the choice of $b_\mathrm{lead}$ on errors, for the sum of two CEFVs of size $n$, in an output register of same size $n=8$. In the top plot, $b_1$ is kept equal to 1, $b_2$ varies from $10^{-3}$ to $10^4$, and $b_\mathrm{lead}$ in $[1,2]$. %
The color intensity represents the relative error, namely $\frac{\epsilon^++\epsilon^-}{b_1 2^n + b_2 2^n}$. In the bottom plot, a focus on three specific choices of $b_\mathrm{lead}$, namely $b_1$ (i.e., 1), $b_2$, and the optimal $b_\mathrm{opt}$. %
} \label{fig:b-lead-summary}
\end{figure}
Focusing on addition for simplicity, we now study how the choice of $b_\mathrm{lead}$ affects the sum through a numerical campaign. We specifically address the error, relative to the scale of the inputs, as a function of $b_1$, $b_2$ and $b_\mathrm{lead}$. For simplicity, we consider the case $n_1 = n_2 =: n$, and hence we define the relative error as $\frac{\epsilon_{\mathrm{approx}}^++\epsilon_{\mathrm{approx}}^-}{b_1 2^n + b_2 2^n}$. Observe that we can restrict ourselves to $b_1=1$, since the relative error is invariant for transformations $(b_1, b_2, b_\mathrm{lead}) \mapsto (\alpha b_1, \alpha b_2, \alpha b_\mathrm{lead})$ for any $\alpha \in \mathbb R \setminus \{0\}$. Additionally, $b_\mathrm{lead}$ can be restricted to the range $[1,2]$, since $2 b_\mathrm{lead}$ provides the same outcome as $b_\mathrm{lead}$.
Results are visualized in the heat maps at the top of Fig.~\ref{fig:b-lead-summary}.

Next, define $b_\mathrm{opt}$ as the optimal value of $b_\mathrm{lead}$ for a given $b_2$. Numerically, $b_\mathrm{lead}$ is obtained by a global search algorithm. Our choice falls on the dual annealing algorithm implemented in Scipy~\cite{2020SciPy-NMeth}. In the bottom heat maps of Fig.~\ref{fig:b-lead-summary}, we compare the choice $b_\mathrm{lead} = b_1 = 1$ and $b_\mathrm{lead} = b_2$ against the optimal $b_\mathrm{lead} = b_\mathrm{opt}$. As expected, the error is lower when one input dominates the other. The best between $b_1$ and $b_2$ is anyway a good choice even for intermediate cases, making the in-place variant introduced in the Methods a viable option. %

The effect of $b_\mathrm{lead}$ is further discussed in the SI in Sec~S3.1. Specifically, Figs.~S4 and~S5 exemplify the effect of $b_\mathrm{lead}$ on the circuit depth, in different settings of $b_2$. Fig.~S6 offers different visualizations for the information in Fig.~2, and extends it to different values of $n$.

\subsection{Advantages of introducing the offset}
Figure~\ref{fig:step} contains a visualization of the repeated application of the sum of a random variable with itself, when the variable is represented with and without offsets. The vertical dashed grids correspond to numbers that can be stored exactly in quantum states. The introduction of offsets allows for a denser grid in a specific area, implying a more precise representation of random variables in the same number of qubits. At the same time, offsets also limit the error propagation when applying arithmetic operations, as evidenced by the much smaller tolerance region in the figure. In the engineered case therein represented, the maximal error is decreased from $\epsilon^+ = 28$ to $\epsilon^+ = 3$, while $\epsilon^-=0$, namely a reduction of $89\%$.
\begin{figure}
    \small
    \centering
    \input{Figures/overflow-step-smallpic.tex}
    \caption{Cumulative distribution function for repeated applications of the sum. Left plot: the initial random variable (providing $6$ or $7$, both with $50\%$ probability), which can be encoded exactly in $3$ qubits both with or without offset. Right plot: after 6 applications of the sum of the initial random variable with itself, the target random variable (in dashed green) compared to the approximating random variable (namely the one encoded in the quantum states), as well as the approximation region (defined as the approximating random variable +/- tolerances). Vertical grids show the values which can be perfectly represented by the quantum variables. The advantage of offsets is clearly visible, in terms of a narrower approximation region.}
    \label{fig:step}
\end{figure}
Fig.~S7 in the SI extends Fig.~3, showing the effect of \textit{each} application of the sum, with and without offset.

\subsection{Benchmarking CEFVs against mono-quantum floating variables}\label{subsec:benchmarking}
Let us finally compare the proposed approach against the \textit{mono-quantum floating variable} (MFV)~\cite{seidel_efficient_2022} on $n$ qubits. By this term, we refer to the quantum variable $(\ket{\cdot}, \mathcal{D}_{d}^{\mathrm{mono}})$ with quantum mantissa and classical exponent $d \in \mathbb{Z}$, with a qubit dedicated to the sign. The domain is the uniformly spaced grid
$$\mathcal{D}_{d}^{\mathrm{mono}} = \left\{ 2^d x' \; | \; x' \in \mathcal{D}^{\mathrm{int}}\right\} = \left\{ - 2^d 2^{n-1}, ..., - 2^d, 0, ..., 2^d 2^{n-1} -1 \right\},$$
where $\mathcal{D}^{\mathrm{int}} := \{ -2^{n-1}, ..., -1, 0, ..., 2^{n-1}-1 \}$ is the domain for signed integers in $n$ qubits, while the encoding is the two's complement of $x'$:
$$g_{d}^{\mathrm{mono}}(x) := g^\mathrm{tc} \left( 2^{-d} x \right) = g^\mathrm{tc} \left( x' \right),
\qquad
g^\mathrm{tc}(x') :=
\begin{cases}
    x' & \text{if } x' \geq 0,\\
    2^n+x' & \text{otherwise}.
\end{cases}
$$

The generality is motivated by the fact that an $n$-qubit MFV with exponent $d$ encodes the same variables as a $(n-1)$-qubit CEFV with $a=-2^d 2^{n-1}$ and $b=2^d$. In contrast, given a random variable encoded by an $n$-qubit CEFV, it is not possible in general to encode it in a $(n+1)$-qubit MFV. In fact, it may be impossible to encode the same random variable exactly in an MFV even with an unlimited number of qubits. The ability of CEFVs to represent random variables in a specific range far from zero with higher precision by few qubits, makes such variables suitable for many quantum applications, in a context where the circuit width and depth are limited by hardware capabilities.

Moreover, summing a classical number to the values of a CEFV, as well as multiplying the values of a CEFV by a classical constant, are purely classical operations. Converesely, the same operations require the application of quantum circuits in the context of MFVs. So, the proposed method eliminates the consumption of critical resources for these manipulations.

The addition or multiplication of two $n$-qubit variables may appear more expensive for CEFVs than they are for MFVs, since the conversion from $b_1, b_2$ to $b_\mathrm{lead}$ is costly if $b_1/b_2$, $b_1/b_\mathrm{lead}$ and $b_2/b_\mathrm{lead}$ are not multiples of 2. Two considerations should be made on this topic. First, the additional overhead may be balanced by the reduced number of qubits required to obtain a similar precision. Secondly, the overhead can be reduced or canceled by choosing $b_1$, $b_2$ and $b_\mathrm{lead}$ appropriately, where the best choice depends on the processing that the variables undergo in the entire quantum circuit, also beyond addition and multiplication.

In summary, CEFVs provide two more degrees of freedom compared to MFVs. On one side, the introduction of the offset, and on the other, the removal of the constraint that factors should be powers of 2. The effect is an increased precision of the representation, under fixed register size. The former change alone (namely, the introduction of the offset under the assumption that scaling factors are powers of 2) implies no overhead for the addition, compared to MFV, since the summation of offsets is accounted classically. On the other hand, we discuss in the Method section that the offsets do introduce a modest overhead on the product. The latter change instead, i.e. the possibility of arbitrary scaling factors, may have a more detrimental impact on performance, and consequently the choice of the scaling factors should be planned with care when designing the overall circuit, which typically contains other processing steps on top of the addition and multiplication described here.

\section{Discussion}\label{sec:discussion}
The above quantum-based representation of random variables valued in real domains differs from known encoding methods for the presence of a classical offset and a classical scaling factor. The definition of such representation also includes a below- and an above-tolerance, to keep track of the distance between the encoded data and the target random variable. Said representation is consistently equipped with a sum and a product operation. The operations are ensured to be performed without overflow, and they automatically tune the scaling factor of the output register in order to achieve an optimal usage of the available qubits, thus minimizing the tolerances associated to the output.
The method proves successful against known techniques of hybrid quantum-classical floating-point processing, with a reduction in the size of tolerance windows up to $89\%$ according to the test cases provided.
The method ensures that all states in the computational basis agree with the correct sum and product, up to a specified tolerance range.

\section{Methods}
The methods to embody both the addition and the multiplication are explained below. For a formal treatise, including proofs and examples, we refer to Section~S2 of the SI.

\subsection{Addition}
The summation algorithm takes as input two CEFVs $\mathcal{F}_1$ and $\mathcal{F}_2$, the size of the output register $n_\mathrm{max}$, and the desired scaling factor for the output, $b_\mathrm{lead}$. Notice that, if $b_\mathrm{lead}$ is very different from $b_1$ and $b_2$, the result may cause overflow or strong underflow. As a consequence, $b_\mathrm{lead}$ is adjusted by the algorithm to a $b_\mathrm{out}$ such that $b_\mathrm{lead}/b_\mathrm{out}$ is an appropriate power of 2 with signed integer exponent, such that overflow is ensured not to happen, and the available qubits in the output registers are maximally exploited to minimize rounding.

To add two CEFVs $\mathcal{F}_1$ and $\mathcal{F}_2$, the offsets can be simply summed together, while states must be processed by a quantum circuit that takes into account the scaling factors.
More precisely, the rescaling factor for $\mathcal{F}_i$ ($i=1,2$) is $b_\mathrm{out} / b_i$, and needs to be rounded to a binary fraction, to fit the representation possibilities of the target register. Said binary fraction is classically known, so the sum of the two inputs gets translated into a weighted sum of qubit values, where weights are integers. Therefore, one relies on the evaluation of polynomials in boolean variables with integer coefficients, a method introduced in Ref.~\cite{seidel_efficient_2022} with the name of \textit{semi-boolean polynomial evaluation}, here called \SBPEval for conciseness. More details are provided in Sec.~S2.1 of the SI. Specifically, the protocol is formalized as Algorithm~S1 and exemplified in Figs.~S2, S3 and S8.

The \Add algorithm produces an output CEFV $\mathcal F_\mathrm{out}$, by defining $n_\mathrm{out}$, $a_\mathrm{out}$ and $b_\mathrm{out}$, as well as the error thresholds $\epsilon_\mathrm{out}^\pm$, which reflect the rounding applied to the weights. The new CEFV $\mathcal F_\mathrm{out}$ has maximal size $n_\mathrm{max}$ qubits, and is such that $b_\mathrm{out} / b_\mathrm{lead}$ is a power of 2, with signed integer exponent. It also provides a quantum circuit $U$ that is able to calculate the sum of the inputs in the following sense: $U$ applied to an input state $\ket{\psi_\mathrm{in}}_{1,2}$ (living in the registers $\ket{\cdot}_1$ and $\ket{\cdot}_2$), calculates the sum of the two underlying random variables, up to a tolerance that is compatible with $\epsilon_\mathrm{out}^\pm$. In formulas, given $Y_i \simeq_{\mathcal{F}_i} \ket{\psi_\mathrm{in}}$ for $i=1,2$, the operator $U$ acts on such state, plus an output register which is assumed to be initialized to $\ket0$ and an auxiliary register that is also supposed to start in $\ket0$, and
\begin{equation}\label{eq:sum-correctness}
    U \ket{\psi_\mathrm{in}}_{1,2} \ket{0}_\mathrm{out} \ket{0}_\mathrm{aux} = \ket{\psi_\mathrm{out}}_{1,2,\mathrm{out}} \ket{0}_\mathrm{aux} \quad \text{such that} \quad
    \begin{cases}
        Y_i \simeq_{\mathcal{F}_i} \ket{\psi_\mathrm{out}} \text{ for } i=1,2,\\
        Y_1 + Y_2 \simeq_{\mathcal{F}_\mathrm{out}} \ket{\psi_\mathrm{out}}.
    \end{cases}
\end{equation}

Notice that $U$ depends on $\mathcal F_1$ and $\mathcal F_2$, namely on $a_i, b_i, \epsilon_i^\pm$ for $i=1,2$, but not on the variables $Y_1$ and $Y_2$, nor on the states $\ket{\psi_1}$ and $\ket{\psi_2}$, which are unknown when the circuit is generated. The validity of Eq.~\eqref{eq:sum-correctness} is the correctness of the algorithm, which is proven in Prop.~S2.9 of the SI and exemplified in Fig.~S3.

\subsection{In-place addition and other extensions}
A variant \cite{seidel_efficient_2022} of \SBPEval can calculate the sum of two unsigned integer registers in-place in the (say) first register, namely there exists an algorithm generating a unitary $U$ such that
$$U \ket{z_1}\ket{z_2} \ket{0} = \ket{z_1+z_2}\ket{z_2} \ket{0}$$
under the assumption of no overflow, for all basis states $\ket{z_1}$ and $\ket{z_2}$. This variant implies a small overhead in the multiplicative constants of the gate count, that does not affect the asymptotic behavior when the register sizes grow. Such variant easily extends to a weighted sum, as long as weights apply only to qubits in the second register, which is our case once we select $b_\mathrm{lead}=b_1$. In other words, we can in turn define a variant of our algorithm that operates in-place, once we guarantee that the target register $\ket{\cdot}_1$ is large enough to contain the result of the sum. %
As a consequence, a good strategy to minimize the number of qubits needed, is picking the \textit{leading register}, complementing it with additional qubits to ensure no overflow, and make in-place addition. By leading register we mean the input register that can contain the highest value, namely the register that maximizes $\abs{b} 2^{n}$, or, as a subordinate criterion, the register with more qubits.

Given our definition of the CEFV, one can trivially provide the appropriate \ClassicalAdd and \ClassicalProd operations, that take a CEFV and a real number in input, and output a new CEFV operating on the same register, with updated characteristics $a$ and $b$, and with no need for manipulating the quantum states. Combining \ClassicalProd with \Add, one can calculate arbitrary linear combinations \LinearCombination of two CEFVs, and specifically the difference.

The \Add and \LinearCombination algorithms easily extend to multiple input CEFVs. The multi-variable algorithm has in general smaller approximation error, compared to the repeated application of pairwise sums.

\subsection{Multiplication of two CEFVs}
Given two input random variables $X_1 = a_1 + b_1 Z_1$ and $X_2 = a_2 + b_2 Z_2$, their product is $X_1 X_2 = (a_1 a_2) + (a_1 b_2 Z_2 + a_2 b_1 Z_1 + b_1 b_2 Z_1 Z_2)$, so a way to compute the integer product $Z_1 Z_2$ is needed, and then the problem reduces to the calculation of a \LinearCombination of CEFVs. On the other hand, also the product $Z_1 Z_2$ can be implemented as an \SBPEval~\cite{seidel_efficient_2022}. This idea can be further improved by defining a single semi-boolean polynomial whose evaluation directly provides the desired result, thus giving rise to our \Multiply method. While addition uses a semi-boolean polynomial that is linear in its variables, the product requires a quadratic polynomial, due to the presence of the term $Z_1 Z_2$. This fact is manifest in the implementation formalized in Algorithm~S2 and discussed more broadly in Sec.~S2.3 of the SI. In the case of addition, the introduction of offsets implies no overhead, as the sum of the offsets can be accounted classically. In the case of the multiplication, instead, offsets cause the addition of the linear terms $a_1 b_2 Z_2$ and $a_2 b_1 Z_1$, and thence an increase in the number of quantum gates in the circuits. However, offsets have no impact on the most complex terms, namely the quadratic form $b_1 b_2 Z_1 Z_2$, thus limiting their negative effect on the circuit depth.

\section*{Acknowledgements}
E.P. acknowledges IBM for having supported this work with IBM Quantum access.

\section*{Funding}
This research received no external funding.

\section*{Data availability}
The data that support the findings of this study are available from the corresponding author upon reasonable request.

\section*{Code availability}
The code and the algorithm used in this study are available from the corresponding author upon reasonable request.

\section*{Author Contributions}
Both G.A. and E.P. contributed to elaborate and discuss the results and to the writing of the manuscript.

\section*{Competing Interests statement}
The Authors declare that the patent application P202300674US01 "Quantum variables implementation" is pending.

\clearpage

\appendix

\input{main-si}

\end{document}

%% file: Figures/blead-summary-small.tex
\centering
\begin{minipage}{.46\textwidth}
\vspace{1mm}
\centering
\includegraphics{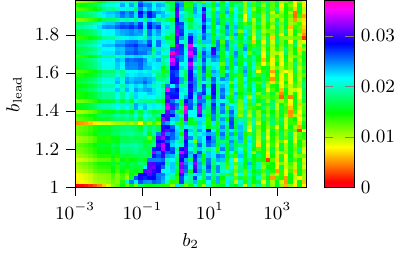}
\\[2mm]
\includegraphics{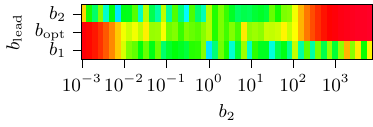}
\end{minipage}

%% file: Figures/overflow-step-smallpic.tex
\includegraphics{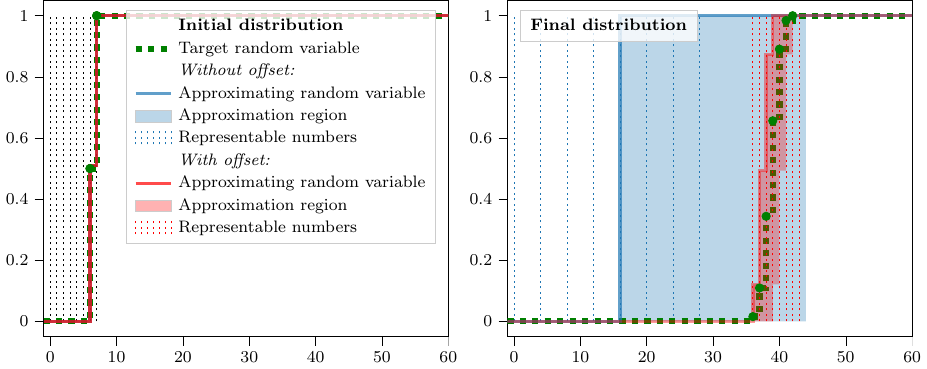}

%% file: main-si.tex
\makeatletter
\renewenvironment{algomathdisplay}
 {\[}
 {\@endalgocfline\vspace{-\baselineskip}\]\;}
\makeatother
\SetKwFunction{SBPEval}{SBPEval}
\SetKwFunction{ClassicalAdd}{ClassicalAdd}
\SetKwFunction{ClassicalProd}{ClassicalProd}
\SetKwFunction{Add}{Add}
\SetKwFunction{LinearCombination}{LinearCombination}
\SetKwFunction{MultiplyNoOffset}{MultiplyNoOffset}
\SetKwFunction{Multiply}{Multiply}
\DontPrintSemicolon
\SetKwInOut{Input}{input}\SetKwInOut{Output}{output}
\SetKwProg{Fn}{Function}{:}{end}

\theoremstyle{definition}
\theoremstyle{plain}
\theoremstyle{remark}

\usepgfplotslibrary{groupplots}
\usetikzlibrary{positioning}
\usetikzlibrary{external}
\tikzexternalize %
\tikzsetexternalprefix{compiled-figures/}

\floatstyle{boxed}
\newfloat{formulabox}{tbp}{lop}
\floatname{formulabox}{Formula box}

\renewcommand\thefigure{S\arabic{figure}}
\renewcommand\thesection{S\arabic{section}}
\renewcommand\thealgocf{S\arabic{algocf}}
\renewcommand\theformulabox{S\arabic{formulabox}}

\section{Quantum variables}
\subsection{Encoding of a random variable and concept of quantum variable}
In quantum computing, given a basis state $\ket{z_0} \cdots \ket{z_{n-1}}$ on $n$ qubits, where $z_i$ are binary digits, it is a common practice to represent it as $\ket{z}$, where $z$ is the unsigned integer $z = \sum_j z_j 2^j$ in $\{ 0, ..., 2^{n}-1\}$. In turn, $\ket{z}$ can be interpreted to be a quantum representation of another number $x$ in a domain $\mathcal{D}$, for instance $\{ 2^{n-1}, ..., 0, ..., 2^{n-1}-1\}$, through a bijective mapping $g$ between $x$ and $z$. Once $g$ is given, it induces a bijection between the random variables valued in $X$ and the quantum states on $n$ qubits, as we formalize in the following Definition.

\begin{defn}[$g$-encoding of a random variable]\label{SIdefn:g-encoding}
Let $n$ be a positive integer, and $\mathcal{D}$ be a discrete real domain with $2^n$ points. Let $g$ be a bijection between $\mathcal{D}$ and $\{0, ..., 2^n-1\}$. Moreover, let $X$ be a random variable valued in $\mathcal{D}$, with a mass probability function $f_X (x) = \mathbb{P}(X=x)$. Then we say that an $n$-qubit state $\ket{\psi_X}$ is the \textit{$g$-encoding for $X$} if
$$\ket{\psi_X} = \sum_{x \in \mathcal{D}} \sqrt{f_X(x)} \ket{g(x)}.$$
\end{defn}

Vice versa, given a state one can define its corresponding random variable:
\begin{defn}[random variable $g$-encoded by a state]\label{SIdefn:g-encoded}
Let $n$ be a positive integer, and $\mathcal{D}$ be a set of $2^n$ real numbers. Let $g$ be a bijection between $\mathcal{D}$ and $\{0, ..., 2^n-1\}$. Finally, let $\ket{\psi}$ be a state on $n$ qubits. The random variable $X_{\ket{\psi}}$ having density
$$f_{X_{\ket{\psi}}}(x) = \abs{\braket{\psi}{g(x)}}^2 \quad \forall x \in \mathcal{D}.$$
is called the \textit{random variable $g$-encoded by $\ket{\psi}$}.
\end{defn}

Notice that, given $\mathcal{D}$, $g$ and perfect knowledge of a state $\ket{\psi}$, one can obtain the random variable $g$-encoded by $\ket{\psi}$, called $X_{\ket{\psi}}$, and then produce its $g$-encoding $\ket{\psi_{X_{\ket{\psi}}}}$. If the original state $\ket{\psi}$ has non negative amplitudes, then $\ket{\psi} = \ket{\psi_{X_{\ket{\psi}}}}$.

Vice versa, starting from a random variable $X$, one can build its $g$-encoding $\ket{\psi_X}$ and then the respective $g$-encoded random variable $X_{\ket{\psi_X}}$. Then, $X=X_{\ket{\psi_X}}$.

Definitions~\ref{SIdefn:g-encoding} and~\ref{SIdefn:g-encoded} are later generalized to the context of multiple dependent random variables, see Definitions~\ref{SIdefn:g-encoding-mult} and~\ref{SIdefn:g-encoded-mult} respectively. Before, though, let us formalize the concept of a quantum variable.

It is practically handy to associate the encoding function $g$ with the quantum register holding the information, thus giving rise to the following definition.
\begin{defn}[quantum variable]
A \textit{quantum variable of size $n$} is a tuple $(\ket{\cdot}, \mathcal D, g)$, where $\ket{\cdot}$ is a quantum register of size $n$, $\mathcal{D}$ is a set of $2^n$ real values called \textit{domain}, and $g$ is a bijection between $\mathcal{D}$ and $\{ 0, ..., 2^n-1\}$ called \textit{encoding function}.
\end{defn}

 We use the expression quantum \textit{variable} in analogy with the term used in classical programming languages, and similar to Refs.~\cite{silq, qrisp}. It should be noted though that a quantum variable does not correspond to a random variable. Rather, a random variable is the \textit{value} of a quantum variable. In other words, a quantum variable is a structure that can contain one random variable at a time, as much as a classical variable is a cell containing one number at a time.

\begin{example}[unsigned integer quantum variables]
As an initial example, consider a register of $n$ qubits, and consider $g$ as the identity map on $\{0, ..., 2^n-1\}$. In this case, basis states are used to encode unsigned integer numbers, and real-amplitude states encode random variables values in the unsigned integer domain. Thus, \textit{unsigned integer quantum variables} are defined.
\end{example}

Such an encoding of unsigned integers is widespread in quantum computing, as it allows to perform quantum parallel calculations. In other words, suppose a quantum circuit can apply an arithmetic operation to each basis state $\ket{z}$ (say, produce $\ket{(z+1) \mod 2^n}$), then the same operator by linearity will apply to random variables encoded in a state $\ket{\psi}_X = \sum \sqrt{f_X(x)} \ket{x}$, applying the same arithmetic operation to all outcomes of the random variable in a single application of the operator. Similar properties hold for arithmetic operations involving two random variables, such as the sum and the product. We shall introduce additional notions in Subsec.~\ref{SIsubsec:multi-qv}, to formalize such result.

\begin{example}[signed integer quantum variables]
As a second example, let us introduce the two's complement representation of a signed integer.
Let $\mathcal{D}^{\mathrm{int}} = \{ -2^{n-1}, ..., -1, 0, ..., 2^{n-1}-1 \}$. The following function
$$g^\mathrm{tc}(x) :=
\begin{cases}
    x & \text{if } x \geq 0,\\
    2^n+x & \text{otherwise}.
\end{cases}
$$
is the so-called \textit{two's complement encoding}, whose importance lies in the fact that it agrees with sum and product, modulo $n$:
$$g^\mathrm{tc}(x+y) = g^\mathrm{tc}(x)+g^\mathrm{tc}(y) \quad \text{for all } x, y \in \mathcal{D}^{\mathrm{int}} \text{ such that } x+y \in \mathcal{D}^{\mathrm{int}},$$
$$g^\mathrm{tc}(xy) = g^\mathrm{tc}(x) g^\mathrm{tc}(y) \quad \text{for all } x, y \in \mathcal{D}^{\mathrm{int}} \text{ such that } xy \in \mathcal{D}^{\mathrm{int}}.$$
If $x,y \in \mathcal{D}^{\mathrm{int}}$ but $x+y \not\in \mathcal{D}^{\mathrm{int}}$ (resp. $xy \not\in \mathcal{D}^{\mathrm{int}}$), we say that \textit{overflow} happens.
Based on $g^\mathrm{tc}$, it is trivial to define a \textit{signed integer quantum variable}.
\end{example}
For practical purposes, it can be more convenient to define the signed integer quantum variable having a parametric encoding function, $g_s: x \mapsto g^\mathrm{tc}(sx)$, where $s \in \{\pm 1 \}$ is a so-called classical sign. This way, it is possible to change the sign of a variable by altering its classical information instead of manipulating its state.

\subsection{Definition of the classically enriched quantum floating-point variable}\label{SIsubsec:qv}

Not every domain $\mathcal{D}$ and bijection $g$ are suitable to extend the quantum parallel property. For instance, our definition of floating-point registers requires a domain of uniformly spaced points. 

\begin{defn}[CEFV]
A \textit{classically enriched quantum floating-point variable (CEFV)} of length $n$ is a tuple $\mathcal{F} = (\ket{\cdot}, \mathcal{D}_{a,b}^{\mathrm{fp}}, g_{a,b}^{\mathrm{fp}}, \epsilon^-, \epsilon^+)$ where $a \in \mathbb{R}$ is the offset, $b \in \mathbb{R}\setminus\{0\}$ is the scaling factor, $\epsilon^-, \epsilon^+ \in [0, +\infty)$ are the below and above errors respectively. The domain is the uniformly spaced grid
$$\mathcal{D}_{a,b}^{\mathrm{fp}} := \left\{ a+b z \;|\; z \in \{ 0, ..., 2^n-1 \} \right\}$$
while the encoding is
$$g_{a,b}^{\mathrm{fp}}(x) := \frac{x-a}{b}.$$
\end{defn}

Allowing $b<0$ may appear redundant, and indeed it is. Nevertheless, we encompass this possibility as it is enables sign swap by simply adjusting the classical parameters, without any quantum state manipulation, similar to the classical sign $s$ we introduced above for signed integer registers.

\begin{remark}
Two quantum variables can insist on the same register $\ket{\cdot}$. This is useful in algorithms to economize on the number of qubits. For instance when one wants to add a fixed number $c \in \mathbb{R}$ to a floating-point variable, it is enough to create a new variable with a modified $a_\mathrm{out} = a_\mathrm{in} + c$, without need to instantiate a new register. It is also useful to interpret the same register both as an integer and as a floating point, by giving rise to two variables. More broadly, the quantum registers of two variables can share \textit{some} of the qubits. The paper shows in practice examples of all these circumstances.
\end{remark}
A quantum algorithm, seen as a tool operating on quantum variables, can make calculations both by (re)defining variables, and by manipulating input states through quantum circuits. Indeed, the algorithms that we introduce can be described by an output variable and a quantum circuit. Before moving to manipulation techniques for variables, let us discuss the error parameters we have defined.

In fact, the motivation for equipping CEFVs with $\epsilon^-$ and $\epsilon^+$ is to keep track of the error propagation, as we are able to compute the sum and the product only up to a rounding error. This leads to the definition of compatibility between random variables and states:
\begin{defn}[random variable $\mathcal{F}$-compatible with a state, later generalized in Definition~\ref{SIdefn:fcompatible-subreg}]\label{SIdefn:fcompatible}
Let $Y$ be a random variable valued in a domain $\mathcal{D}_Y$, not necessarily discrete, let $\mathcal{F}=(\ket{\cdot}, \mathcal{D}_{a,b}^{\mathrm{fp}}, g_{a,b}^{\mathrm{fp}}, \epsilon^-, \epsilon^+)$ be a CEFV of length $n$, and $\ket{\psi}$ an $n$-qubit state. Let $X_{\ket{\psi}}$ be the random variable $g_{a,b}^{\mathrm{fp}}$-encoded by $\ket{\psi}$. We say that $Y$ is \textit{$\mathcal{F}$-compatible with the state $\ket{\psi}$} if $- \epsilon^{-} \leq Y-X_{\ket{\psi}} \leq \epsilon^+$, see Fig.~\ref{SIfig:fcompatible}.

In this case, we write $Y \simeq_\mathcal{F} \ket{\psi}$.
\end{defn}
In other words, $Y \simeq_\mathcal{F} \ket{\psi}$ means that $\ket{\psi}$ represents an approximation of $Y$ under the encoding and the tolerances specified by $\mathcal{F}$. %
\begin{figure}
    \input{Figures/YapproxX.tex}
    \caption{Cumulative distribution function (CDF) of the target variable $Y$ (in green) and its approximating variable $X=X_{\ket{\psi}}$ (in red) in the sense of the Definition~\ref{SIdefn:fcompatible}. Notice that $X$ is bound to take values in the domain $\mathcal{D}_{a,b}^{\mathrm{fp}}$, represented by the vertical dashed grid (the first line being at $a$, and the interline space being $b$). $Y$ is $\mathcal{F}$-compatible with the state $\ket{\psi}$ as $Y$ belongs to the approximation region in light blue $[X-\epsilon^-, X+\epsilon^+]$.}\label{SIfig:fcompatible}
\end{figure}

\subsection{Encoding multiple random variables in subregisters}\label{SIsubsec:multi-qv}
To make multiple quantum variables interact, it is essential to extend Definitions~\ref{SIdefn:g-encoding}, \ref{SIdefn:g-encoded} and \ref{SIdefn:fcompatible} to multiple, possibly dependent random variables in the following way:

\begin{defn}[$(g_1, ..., g_k)$-encoding of $k$ random variables]\label{SIdefn:g-encoding-mult}
Let $k$ be a positive integer. Let $n_1, ..., n_k$ be positive integers, and $\mathcal{D}_1, ..., \mathcal{D}_k$ be discrete real domains with $2^{n_k}$ points respectively. Let $g_j$ be bijections between $\mathcal{D}_j$ and $\{0, ..., 2^{n_j}-1\}$ for all $j=1,...,k$. Moreover, let $X_1, ..., X_k$ be random variables valued in $\mathcal{D}_1, ..., \mathcal{D}_k$ respectively, with a joint mass probability function $f_{X_1, ..., X_k} (x_1, ..., x_k) = \mathbb{P}(X_1=x_1, ..., X_k=x_k)$. Then we say that an $(n_1+...+n_k)$-qubit state $\ket{\psi_{(X_1, ..., X_k)}}$ is the \textit{$(g_1, ..., g_k)$-encoding for $X_1, ..., X_k$} if
$$\ket{\psi_{(X_1, ..., X_k)}} = \sum_{x_j \in \mathcal{D}_j, j=1,...,k} \sqrt{f_{X_1, ..., X_k} (x_1, ..., x_k)} \ket{g_1(x_1)} \cdots \ket{g_k(x_k)}.$$
\end{defn}

Notice that, given two random variables $X_1$ and $X_2$ and two encoding functions $g_1$ and $g_2$, one can build their separate $g_1$- and $g_2$-encodings $\ket{\psi_{X_1}}$ and $\ket{\psi_{X_2}}$, or their joint encoding $\ket{\psi_{(X_1, X_2)}}$. We have $\ket{\psi_{X_1}} \ket{\psi_{X_2}} = \ket{\psi_{(X_1, X_2)}}$ if and only if the two random variables are independent. In other terms, the encoding translates dependency of random variables into entanglement of quantum states.

Now we want to generalize the Definition~\ref{SIdefn:g-encoded} of $g$-encoded random variable, to multiple variables. The generalization is relevant in two senses. On one side, we need to infer a joint distribution of multiple random variable encoded in a single state. On the other side, we may need to discard some registers, and infer the random variables encoded only in some other registers. We collectively handle the two cases in the next Definition.

\begin{defn}[random variables $(g_1, ..., g_k)$-encoded in subregisters by a state]\label{SIdefn:g-encoded-mult}
Let $k$ be a positive integer. Let $n_1, ..., n_k$ be positive integers, and let $\mathcal{D}_1, ..., \mathcal{D}_k$ be sets of $2^{n_1}, ..., 2^{n_k}$ real numbers respectively. Let $g_j$ be a bijection between $\mathcal{D}_j$ and $\{0, ..., 2^{n_j}-1\}$ for all $j=1, ..., k$. Also, let $\ket{\cdot}_1, ..., \ket{\cdot}_k$ be registers, without qubits in common, of size $n_1, ..., n_k$ qubits, respectively. Finally, let $\ket{\psi}$ be a state on all the qubits of the registers above, and potentially on $n_a$ more qubits, which we group into an auxiliary register $\ket{\cdot}_a$. The random variables $(X_1, ..., X_k)_{\ket{\psi}}$ having joint density
$$f_{(X_1, ..., X_k)_{\ket{\psi}}}(x_1, ..., x_k) = \sum_{i=0}^{2^{n_a}-1} \abs{\braket{\psi}{g_1(x_1) \cdots g_k(x_k) i}}^2 \quad \forall (x_1, ..., x_k) \in \mathcal{D}_1 \times \cdots \times \mathcal{D}_k$$
are called the \textit{random variables $(g_1, ..., g_k)$-encoded in the subregisters $\ket{\cdot}_1, ..., \ket{\cdot}_k$ by $\ket{\psi}$}.
\end{defn}

At this point, the $\mathcal{F}$-compatibility of a random variable with a state can be extended also to the case where $\ket{\cdot}_\mathcal{F}$ of $\mathcal F$ is a subregister of the register defining the quantum system:
\begin{defn}[random variable $\mathcal{F}$-compatible with a state]\label{SIdefn:fcompatible-subreg}
Let $Y$ be a random variable valued in any domain $\mathcal{D}_Y$, let $\mathcal{F}=(\ket{\cdot}_{\mathcal{F}}, \mathcal{D}_{a,b}^{\mathrm{fp}}, g_{a,b}^{\mathrm{fp}}, \epsilon^-, \epsilon^+)$ be a CEFV of length $n$, and $\ket{\psi}$ an $m$-qubit state in a system comprising the register $\ket{\cdot}_{\mathcal{F}}$ (therefore, $m \geq n$). Let $X$ be the random variable $g_{a,b}^{\mathrm{fp}}$-encoded in the subregister  $\ket{\cdot}_{\mathcal{F}}$ by $\ket{\psi}$, in the sense of Definition~\ref{SIdefn:g-encoded-mult}. We say that $Y$ is \textit{$\mathcal{F}$-compatible (in the subregister $\ket{\cdot}_{\mathcal{F}}$) with the state $\ket{\psi}$} if $- \epsilon^{-} \leq Y-X \leq \epsilon^+$.
In this case, we write $Y \simeq_\mathcal{F} \ket{\psi}$.
\end{defn}
We refer to the former Definition only in terms of a variable \textit{$\mathcal{F}$-compatible with the state $\ket{\psi}$}, leaving implicit the fact that the compatibility is verified on the subregister $\ket{\cdot}_{\mathcal{F}}$ defined by $\mathcal F$ itself.

\section{Addition and multiplication of CEFVs}
\subsection{Prior art: evaluation of semi-boolean polynomials}\label{SIsubsec:sbp}
This section collects some results from prior literature that are useful for our theory. %

\begin{prop}[Semi-boolean polynomial (SBP) evaluation \cite{seidel_efficient_2022}]\label{SIprop:sbp}
Let $\ket{\cdot}_\mathrm{in}$ and $\ket{\cdot}_\mathrm{out}$ be two quantum registers of size $n_\mathrm{in}$ and $n_\mathrm{out}$ respectively. Let $p$ be a polynomial of $n_\mathrm{in}$ variables with integer coefficients.
Then there exists and algorithm $\SBPEval(n_\mathrm{in}, n_\mathrm{out}, p)$ that produces a quantum unitary circuit $U$ working on $\ket{\cdot}_\mathrm{in}$ and $\ket{\cdot}_\mathrm{out}$ and 1 ancillary qubit. For all $z = \sum_{j=0}^{n_\mathrm{in}-1} z_j 2^j \in \{ 0, ..., 2^{n_\mathrm{in}}-1 \}$, the circuit $U$ calculates $p$ applied to the binary digits $[z_j]$ of $z$, modulo $n_\mathrm{out}$. In formulas:
$$U \ket{z} \ket{0} \ket{0} = \ket{z} \ket{y} \ket{0},$$
where $y \equiv_{n_\mathrm{out}} p(z_0, ..., z_{n_\mathrm{in}-1})$.
In the basis made of 1- and 2-qubit gates, the gate count of $U$ is $\mathcal{O}\left( n_\mathrm{out}^2 + n_\mathrm{out} \card (m \in p) + \sum_{m \in p} c(\deg m) \right)$, uniformly in the coefficients of $p$, where $m \in p$ are the monomials in $p$, and $c(n)$ is the gate complexity of an $n$-controlled NOT gate.
\end{prop}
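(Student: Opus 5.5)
The construction is the Fourier-space (Draper-style) evaluation of $p$ on the output register. First, apply the quantum Fourier transform $\mathrm{QFT}$ on $n_\mathrm{out}$ qubits to the output register. This maps $\ket{0}$ to the uniform superposition $\frac{1}{\sqrt{2^{n_\mathrm{out}}}}\sum_k \ket{k}$. In the Fourier basis, adding an integer $c$ modulo $2^{n_\mathrm{out}}$ (we read ``modulo $n_\mathrm{out}$'' in the statement as modulo $2^{n_\mathrm{out}}$, the only meaningful interpretation for an $n_\mathrm{out}$-qubit register) is diagonal. It is the tensor product of single-qubit phase gates $P(2\pi c\,2^{j}/2^{n_\mathrm{out}})$ on output qubit $j$.

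Next, write $p=\sum_{m\in p} c_m \prod_{j\in S_m} z_j$ with $c_m\in\mathbb Z$ and $S_m\subseteq\{0,\dots,n_\mathrm{in}-1\}$. Since the $z_j$ are boolean, $z_j^k=z_j$, so we may assume every monomial is multilinear; this changes neither the value on basis states nor the count. For each monomial, the required operation is ``add $c_m$ modulo $2^{n_\mathrm{out}}$ if all $z_j$, $j\in S_m$, equal $1$.'' Implement it as follows:
\begin{itemize}
\item compute the AND of the qubits in $S_m$ into the single ancilla with a $|S_m|$-controlled NOT, at cost $c(\deg m)$;
\item apply the $n_\mathrm{out}$ phase gates, each controlled on the ancilla, i.e.\ $n_\mathrm{out}$ two-qubit controlled-phase gates;
\item uncompute the ancilla with the same multi-controlled NOT.
\end{itemize}
Constant monomials need no control. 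Because all these operators are diagonal in the Fourier basis and commute, their product multiplies $\ket{k}$ by $e^{2\pi i k\,p(z)/2^{n_\mathrm{out}}}$. Finally, apply $\mathrm{QFT}^{-1}$, which yields $\ket{y}$ with $y\equiv p(z_0,\dots,z_{n_\mathrm{in}-1}) \pmod{2^{n_\mathrm{out}}}$.

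Correctness on each basis state $\ket{z}\ket{0}\ket{0}$ follows from the phase computation above. The input register is untouched because it serves only as control. The ancilla returns to $\ket0$ after each monomial because it is uncomputed.

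The gate count is then a sum of three contributions:
\begin{itemize}
\item the QFT and its inverse, $\mathcal O(n_\mathrm{out}^2)$;
\item the controlled phases, $n_\mathrm{out}$ per monomial, giving $\mathcal O(n_\mathrm{out}\card(m\in p))$;
\item the Toffoli-type AND computations, $2c(\deg m)$ per monomial.
\end{itemize}
Uniformity in the coefficients holds because $c_m$ enters only as a rotation angle, reduced modulo $2\pi$; it never affects the number of gates.

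The delicate point is the ancilla budget. A general $k$-controlled NOT with only one ancilla might seem to need extra work space. This is addressed by defining $c(n)$ as the cost of an $n$-controlled NOT realized in the 1- and 2-qubit basis, and known constructions achieve it with borrowed (dirty) qubits. The output register can serve as those borrowed qubits, because its Fourier-basis state is restored by the dirty-ancilla construction. I would cite this rather than reprove it, since the proposition is stated as prior art from \cite{seidel_efficient_2022}.
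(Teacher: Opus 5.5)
Your proposal is correct and is essentially the Fourier-arithmetic construction of \cite{seidel_efficient_2022}, which the paper cites in place of giving its own proof. That construction applies a QFT to the output register, uses a single ancilla to hold the AND of each monomial's variables and control $n_\mathrm{out}$ phase rotations, uncomputes, and finishes with an inverse QFT, producing exactly the terms $\mathcal{O}(n_\mathrm{out}^2)$, $\mathcal{O}(n_\mathrm{out}\card(m\in p))$ and $\sum_{m\in p} c(\deg m)$ of the stated count.
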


The number of ancillas can be increased to gain an advantage in the multiplicative constant of the gate count \cite{seidel_efficient_2022}. The function $c(n)$ can take multiple forms, but for our purposes it is enough to restrict to the hypothesis of the following Corollary. The interested reader can refer to Ref.~\cite{barenco1995elementary} for an implementation of the $n$-controlled NOT either be with $1$ ancilla qubit, $\mathcal O(2^n)$ gates and depth \cite[Lemma~7.1]{barenco1995elementary}, or with $n-1$ ancillas, and $\mathcal O(n)$ gates and depth \cite[Lemma~7.2]{barenco1995elementary}. Ref.~\cite{maslov_advantages_2016} further improved the ancillas necessary to achieve a $\mathcal O (n)$ depth, to $\left\lceil \frac{n-3}{2} \right\rceil$ for $n \geq 5$.
\begin{coroll}\label{SIcor:sbp}
Under the assumptions of Prop.~\ref{SIprop:sbp}, if furthermore $\deg p$ keeps limited when $n_\mathrm{in}$ grows, say $\deg p \leq d$, then the gate count is $\mathcal{O} \left( n_\mathrm{out}^2 + n_\mathrm{out} \cdot \card (m \in p) \right)$.
\end{coroll}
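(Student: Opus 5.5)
The corollary follows from Prop.~\ref{SIprop:sbp} once the term $\sum_{m \in p} c(\deg m)$ is absorbed into the other two. That proposition gives the gate count
$$\mathcal{O}\Big( n_\mathrm{out}^2 + n_\mathrm{out} \card (m \in p) + \sum_{m \in p} c(\deg m) \Big),$$
uniformly in the coefficients of $p$. Hence it suffices to show that, under the hypothesis $\deg p \leq d$, we have $\sum_{m \in p} c(\deg m) = \mathcal{O}(\card(m \in p))$, with a constant depending only on $d$. Once this holds, $\mathcal{O}(\card(m\in p)) \subseteq \mathcal{O}(n_\mathrm{out} \card(m\in p))$ because $n_\mathrm{out} \geq 1$, and the claimed bound follows.

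To bound each summand, fix an implementation of the $k$-controlled NOT, for instance the one with one ancilla and $\mathcal{O}(2^k)$ gates from \cite[Lemma~7.1]{barenco1995elementary}. This choice is compatible with the single ancillary qubit in Prop.~\ref{SIprop:sbp}. Only the fact that $c(k)$ is a finite number for each fixed $k$ matters, since any of the cited implementations works. Then set
$$C_d := \max_{0 \leq k \leq d} c(k),$$
which is a constant independent of $n_\mathrm{in}$, $n_\mathrm{out}$ and the coefficients of $p$. Because every monomial satisfies $\deg m \leq \deg p \leq d$, each term obeys $c(\deg m) \leq C_d$. Summing over the monomials gives $\sum_{m\in p} c(\deg m) \leq C_d \card(m \in p)$.

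The only delicate point is the meaning of the asymptotic statement. The $\mathcal{O}$ is taken as $n_\mathrm{in}$ and $n_\mathrm{out}$ grow with $d$ held fixed, so the hidden constant may depend on $d$ (for example through $2^d$) but not on the register sizes. The argument also needs $c$ to depend only on the number of controls and not on $n_\mathrm{in}$. The cited constructions satisfy this, so I expect no real obstacle beyond stating this convention explicitly.
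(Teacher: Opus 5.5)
Your proposal is correct and follows the same route as the paper. Both arguments bound each $c(\deg m)$ by a constant depending only on $d$, which gives $\sum_{m\in p} c(\deg m)=\mathcal{O}(\card(m\in p))$, and then absorb this term into $n_\mathrm{out}\card(m\in p)$ using $n_\mathrm{out}\geq 1$.

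Your constant $C_d=\max_{0\leq k\leq d} c(k)$ is the precise way to state this step. The paper's written bound $\sum_{m\in p} c(\deg m)\leq d\,\card(m\in p)$ would literally require $c(k)\leq d$ for $k\leq d$, which need not hold, for instance with the $\mathcal{O}(2^k)$-gate construction.
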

\begin{proof}
    $\deg p \leq d$ implies $\deg m \leq d$ for all monomials $m \in p$. Obviously $\sum_{m \in p} c(\deg m) \leq d \card (m \in p) = \mathcal{O}\left( n_\mathrm{out}\card (m \in p) \right)$.
\end{proof}

The former Proposition can be used for many useful arithmetic manipulations of signed and unsigned integers, as well as floating point numbers. We start be mentioning two simple ones in the following Corollary, while we refer to the original paper for other results. As said at the beginning, we build on top of Proposition~\ref{SIprop:sbp} in the results of the following subsections.

\begin{coroll}[Unsigned modular addition and multiplication \cite{seidel_efficient_2022}] \label{SIcor:sbp-sumandprod} Let $\ket{\cdot}_{1}$, $\ket{\cdot}_{2}$ and $\ket{\cdot}_\mathrm{out}$ be three quantum registers of size $n_{1}$, $n_{2}$ and $n_\mathrm{out}$ respectively. Then the \SBPEval algorithm can be used to produce unitaries $U_A$ and $U_P$ that calculate sum and product of the integer inputs into the output register, modulo $n_\mathrm{out}$, in $\mathcal{O}(n_\mathrm{out} (n_{1} + n_{2}))$ and $\mathcal{O}(n_\mathrm{out} n_{1} n_{2})$ depth respectively.
\end{coroll}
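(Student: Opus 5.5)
The plan is to realise both operations as instances of Prop.~\ref{SIprop:sbp}, choosing a semi-boolean polynomial whose value on the bit strings of the inputs is the integer sum or product. The depth bounds then follow from Cor.~\ref{SIcor:sbp}, which applies because the degree stays bounded. Write $z = \sum_{j=0}^{n_1-1} z_j 2^j$ and $w = \sum_{k=0}^{n_2-1} w_k 2^k$ for the basis labels of $\ket{\cdot}_1$ and $\ket{\cdot}_2$. I would regard the concatenated register $\ket{\cdot}_1\ket{\cdot}_2$ as a single input register of $n_\mathrm{in} = n_1+n_2$ qubits, with boolean variables $(z_0,\dots,z_{n_1-1},w_0,\dots,w_{n_2-1})$.

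For the sum, the polynomial is
\[
p_A(z_0,\dots,w_{n_2-1}) = \sum_{j=0}^{n_1-1} 2^j z_j + \sum_{k=0}^{n_2-1} 2^k w_k .
\]
It has integer coefficients and degree $1$, and on bit strings it evaluates to $z+w$. Prop.~\ref{SIprop:sbp} then gives $U_A\ket{z}\ket{w}\ket{0}\ket{0} = \ket{z}\ket{w}\ket{y}\ket{0}$ with $y \equiv z+w$ modulo $2^{n_\mathrm{out}}$, which is the intended meaning of ``modulo $n_\mathrm{out}$''. The polynomial has $n_1+n_2$ monomials, so Cor.~\ref{SIcor:sbp} with $d=1$ bounds the gate count by $\mathcal O(n_\mathrm{out}^2 + n_\mathrm{out}(n_1+n_2))$.

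For the product, the polynomial is
\[
p_P = \sum_{j=0}^{n_1-1}\sum_{k=0}^{n_2-1} 2^{j+k} z_j w_k ,
\]
which has degree $2$ and $n_1 n_2$ monomials. Since $z w = \sum_{j,k} 2^{j+k} z_j w_k$, it evaluates exactly to $zw$. Cor.~\ref{SIcor:sbp} with $d=2$ gives $\mathcal O(n_\mathrm{out}^2 + n_\mathrm{out} n_1 n_2)$.

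The main obstacle is reconciling these counts with the stated bounds, which lack the $n_\mathrm{out}^2$ term and which speak of depth rather than gate count. Two observations handle this. First, depth is at most gate count, so a gate-count bound is enough. Second, coefficients $2^{j}$ or $2^{j+k}$ with exponent at least $n_\mathrm{out}$ vanish modulo $2^{n_\mathrm{out}}$, so those monomials can be dropped. In the additive case this lets one assume without loss of generality that $n_\mathrm{out} \le \max(n_1,n_2)+1$, since larger output registers only carry zeros in the high bits and need no gates. Then $n_\mathrm{out}^2 = \mathcal O(n_\mathrm{out}(n_1+n_2))$. Likewise for the product, one may take $n_\mathrm{out} \le n_1+n_2 \le 2 n_1 n_2$, which absorbs $n_\mathrm{out}^2$ into $\mathcal O(n_\mathrm{out} n_1 n_2)$.

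If instead one wants to avoid this truncation, the alternative is to open up the construction of \cite{seidel_efficient_2022}: the $n_\mathrm{out}^2$ term comes from the QFT on the output register, whose depth is only $\mathcal O(n_\mathrm{out})$. Either route yields the claimed $\mathcal O(n_\mathrm{out}(n_1+n_2))$ and $\mathcal O(n_\mathrm{out} n_1 n_2)$.
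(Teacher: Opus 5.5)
Your proposal is correct and takes essentially the paper's route: the same polynomials $p_A$ and $p_P$ are fed into Cor.~\ref{SIcor:sbp}, and the $n_\mathrm{out}^2$ term is absorbed by assuming without loss of generality that $n_\mathrm{out}$ is bounded (the paper takes $n_\mathrm{out}$ at most the number of monomials). You also make explicit two points the paper leaves implicit: why this truncation is harmless (the exact sum or product fits in $\max(n_1,n_2)+1$ or $n_1+n_2$ bits, so the higher output qubits stay $\ket{0}$), and that depth is bounded by gate count.
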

\begin{proof} Use the previous Corollary with
$$p_A = \sum_{j=0}^{n_{1}-1} z_{1, j} 2^j + \sum_{j=0}^{n_{2}-1} z_{2, j} 2^j, \qquad
p_M = \left( \sum_{j=0}^{n_{1}-1} z_{1, j} 2^j \right) \left( \sum_{j=0}^{n_{2}-1} z_{2, j} 2^j \right),$$
having $m(p_A) = n_{1} + n_{2}$ and $m(p_P) = n_{1} n_{2}$. Finally consider that WLOG we can assume $n_\mathrm{out} \leq m(p_S)$ and $n_\mathrm{out} \leq m(p_P)$.
\end{proof}

\subsection{Addition}
Algorithm~\ref{SIalgo:add} formalizes the \Add algorithm sketched in the Method section of the main manuscript.

\begin{algorithm}
\small
\Fn{\Add}{
\Input{Two CEFVs $\mathcal{F}_1$ and $\mathcal{F}_2$; a positive integer $n_\mathrm{max}$; a real number $b_\mathrm{lead} \neq 0$}
\Output{An $n_\mathrm{out}$-qubit CEFV $\mathcal{F}_\mathrm{out}$; a quantum circuit $U$}
\BlankLine
Initialize $n_\mathrm{out} \in \mathbb N$ as $n_\mathrm{out} \leftarrow n_\mathrm{max}$ \label{SIalgo:add:step1}\;
Initialize $M \in \mathbb Z$ as
\begin{algomathdisplay}
M \leftarrow  \left\lceil \log_2 \left(2^{n_\mathrm{out}} -1\right) - \log_2 \left( 
(2^{n_1}-1) \abs{\frac{b_1}{b_\mathrm{lead}}} +
(2^{n_2}-1) \abs{\frac{b_2}{b_\mathrm{lead}}} 
\right) \right\rceil \label{SIalgo:add:init-M}
\end{algomathdisplay}
For all $k \in \{ 1, 2\}$ and for all $j_k \in \{ 0, ..., n_k-1\}$, define $w_{k,{j_k}} \in \mathbb{Z}$ as the closest-integer rounding of $2^{j_k} 2^M b_k / b_\mathrm{lead}$ \label{SIalgo:add:weights} \;
Repeatedly decrease $M$ by $1$ and recalculate the weights as in the previous step, until
\begin{algomathdisplay}
\sum_{j_1=0}^{n_1-1} \abs{w_{1,j_1}} + \sum_{j_2=0}^{n_2-1} \abs{w_{2,j_2}} \leq 
2^{n_\mathrm{out}}-1
\end{algomathdisplay}
\label{SIalgo:add:iter-M}
Tighten the bound in the previous inequality by lowering $n_\mathrm{out}$ as long as the inequality holds \label{SIalgo:add:tighten-nout} \;
Define the following integer-coefficient polynomial, where the last term accounts for the sign qubit \label{SIalgo:add:poly}
\begin{algomathdisplay}
p(z_1,z_2) :=
\sum_{j_1=0}^{n_1-1} z_{1,j_1} w_{1,j_1} +
\sum_{j_2=0}^{n_2-1} z_{2,j_2} w_{2,j_2}
\end{algomathdisplay}
Apply $\SBPEval(n_1+n_2, n_\mathrm{out}, p)$
to obtain the circuit $U$ that evaluates $p$ into a register $\ket{\cdot}_\mathrm{out}$ \label{SIalgo:add:weightedsum} \;
Set the characteristics of $\mathcal{F}_\mathrm{out}$ as in Formula box~\ref{SIfb:add}
}
\caption{Addition between CEFVs with no-overflow guarantee. Refer to Prop.~\ref{SIprop:add} for the details.}\label{SIalgo:add}
\end{algorithm}

\begin{formulabox}\small
$$
\begin{aligned}
    a_\mathrm{out} &:= a_1 + a_2\\
    b_\mathrm{out} &:= 2^{-M} b_\mathrm{lead}\\
    \epsilon^\pm_\mathrm{approx} &:= 2^{-M} \abs{b_\mathrm{lead}} \sum_{k=1}^2 \sum_{j_k=0}^{n_k-1} \max \left\{\pm \left( 2^M \frac{b_k}{b_\mathrm{lead}} 2^{j_k} - w_{k,j_k} \right) \sgn b_\mathrm{lead}, 0 \right\}\\
    \epsilon^+_\mathrm{out} &:= \epsilon_1^+ + \epsilon_2^+ + \epsilon_\mathrm{approx}^+ \\
    \epsilon^-_\mathrm{out} &:= \epsilon_1^- + \epsilon_2^- + \epsilon_\mathrm{approx}^-
\end{aligned}
$$
\caption{Output variable characteristics for the addition (refer to Algorithm~\ref{SIalgo:add}).}\label{SIfb:add}
\end{formulabox}

Let us recall from the paper that, given two CEFVs $\mathcal F_1$ and $\mathcal F_2$, and a real number $b_\mathrm{lead} \neq 0$, Algorithm~\ref{SIalgo:add} produces an output CEFV $\mathcal F_\mathrm{out}$, namely, it defines $n_\mathrm{out}$, $a_\mathrm{out}$ and $b_\mathrm{out}$, as well as the error thresholds $\epsilon_\mathrm{out}^\pm$. The register $\mathcal F_\mathrm{out}$ is guaranteed to have at most $n_\mathrm{max}$ qubits. Moreover, $b_\mathrm{out} / b_\mathrm{lead}$ is a power of 2 with signed integer exponent. The algorithm also provides a quantum circuit $U$ that is able to calculate the sum of the inputs in the following sense: given two independent input states $\ket{\psi_1}$ and $\ket{\psi_2}$, living in the registers $\ket{\cdot}_1$ and $\ket{\cdot}_2$ respectively, $U$ applied to them calculates their sum up to a tolerance that is compatible with $\epsilon_\mathrm{out}^\pm$.
More in general, if $\ket{\psi_{1,2}}$ is a state representing two possibly dependent random variables in registers $\ket{\cdot}_1 \ket{\cdot}_2$, in the sense of Definition~\ref{SIdefn:g-encoding-mult}, then $U$ is still able to calculate their sum into an output register, within tolerances.
In formulas, given $Y_i \simeq_{\mathcal{F}_i} \ket{\psi_{1,2}}$ for $i=1,2$, the operator $U$ acts on such state, plus an output register which is assumed to be initialized to $\ket0$ and an auxiliary register that is also supposed to start in $\ket0$, and
\begin{equation}\label{SIeq:sum-correctness}
    U \ket{\psi_{1,2}} \ket{0} \ket{0} = \ket{\psi_{1,2, \mathrm{out}}} \ket{0} \quad \text{such that} \quad
    \left\{
    \begin{array}{lll}
    Y_i & \simeq_{\mathcal{F}_i} & \ket{\psi_{1,2,\mathrm{out}}}, \quad i=1,2,\\
    Y_1 + Y_2 & \simeq_{\mathcal{F}_\mathrm{out}} & \ket{\psi_{1,2,\mathrm{out}}},
    \end{array}
    \right.
\end{equation}
where the first condition states that inputs are unchanged, while the second condition verifies that the output register contains the sum.

The validity of Eq.~\eqref{SIeq:sum-correctness} is the correctness of the algorithm, proven in Proposition~\ref{SIprop:add}. Before, though, let us clarify the algorithm formulation by mean of some examples, collectively visualised in Fig.~\ref{SIfig:sum}.
\begin{example}\label{SIex:cefvs-sum1}
Let us start with a simple case, where $\mathcal{F}_1$, $\mathcal{F}_2$ are two variables of size $n_1 = n_2 = n$, with no offsets ($a_1=a_2 =0$) and $b_1=b_2=1$. In this case, the two registers represent two signed integers in $\{0, ..., 2^{n}-1\}$. Take $b_\mathrm{lead}=1$ as well and suppose $n_\mathrm{max} = n+1$, so that the output certainly fits into the output register without approximation. In this case, steps~\refAlgoStep{SIalgo:add:step1} to~\refAlgoStep{SIalgo:add:tighten-nout} define $n_\mathrm{out} = n+1$, $M=0$ and $w_1=w_2=[2^0, 2^1, ..., 2^{n-1}]$. The meaning of these weights, is that the least significant digit of either inputs affects only the least significant digit of the output, and so on, without any re-scaling happening. Steps~\refAlgoStep{SIalgo:add:poly} and~\refAlgoStep{SIalgo:add:weightedsum} make the addition through \SBPEval, and finally the output variable characteristics are assigned in the last step. We have $a_\mathrm{out}=0$, $b_\mathrm{out}=1$, and all $\epsilon=0$. As expected, an integer sum in suitably sized registers requires no approximation.
\end{example}

\begin{figure}
(a)
\includegraphics{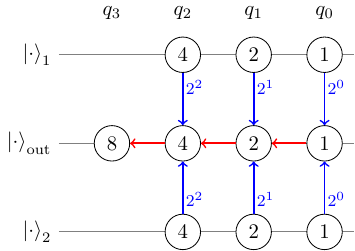}
\hfill
(b)
\includegraphics{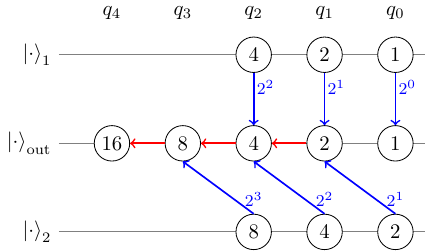}
\\[1cm]
(c) \hspace{1mm}
\includegraphics{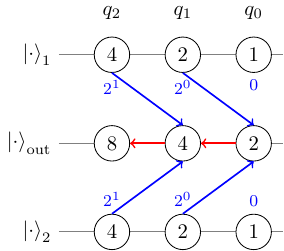}
\hfill
(d) \hspace{1mm}
\includegraphics{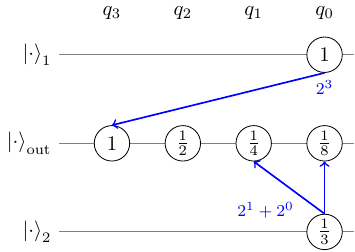}
\caption{A visual representation of how input registers (top and bottom row of each subfigure) contribute to the sum (mid row). The four cases are relative to (a) Example~\ref{SIex:cefvs-sum1} with $n=3$, (b) Example~\ref{SIex:cefvs-sum2} with $n=3$, (c) Example~\ref{SIex:cefvs-sum-approx} with $n=3$, and (d) Example~\ref{SIex:one-plus-onethird}. Each register follows the representation in Fig.~1 from the main manuscript, with the omission of $a$ which is always $0$ in the four examples. Numbers in blue are weights associated to each input qubit. Blue arrows represent the effect of such weights on the output qubits. Notice that, once a weight is decomposed as the sum of powers of two, the exponents represent the indexes of the target qubits in the output register. Red arrows mark the sum carryover accumulating along the output register.}\label{SIfig:sum}
\end{figure}

\begin{example}\label{SIex:cefvs-sum2}
Suppose $n_1 = n_2 = n$ and $a_1=a_2 =0$ as before, but $b_1=1$ and $b_2=2$ this time, with $b_\mathrm{lead}=1$. In this case, assume $n_\mathrm{max}=n+2$, so that again there is no need for re-scaling. We get $n_\mathrm{out} = n+2$, $M=0$, $w_1=[2^0, 2^1,...,2^{n-1}]$ and $w_2=[2^1, 2^2,...,2^{n}]$. In other terms, $w_2 = 2 w_1$. This way, when the sum is made, the least significant digit only contributes to the least significant digit of the output; then, the second least significant digit of the first register is coupled with the least significant digit of the second register, and so on, providing the expected result. Coherently, there is no error in the output.
\end{example}

\begin{example}\label{SIex:cefvs-sum2var}
In the previous example, if we took $b_\mathrm{lead} = 2$, this would be compensated by $M=1$, and the result would be the same. The aim of $M$ indeed is identifying the appropriate re-scale factor as to maximize the utilization of available qubits.
\end{example}

\begin{example}\label{SIex:cefvs-sum-approx}
Now take all values as in Example~\ref{SIex:cefvs-sum1}, namely $n_1=n_2=n$, $a_1=a_2=0$, $b_1=b_2=b_\mathrm{lead}=1$, except for $n_\mathrm{max} = n$ instead of $n+1$, so that an approximation needs to be made to ensure no overflow. By applying the algorithm, we get $M=1$, and $w_1=w_2=[0; 2^0,2^1,..., 2^{n-2}]$. The 0 coefficient applied to the least significant qubit of each input, means that these qubits are discarded, and everything is shifted by 1 position, to free a qubit in the result register for the carry. Consistently, a truncation arises, giving $\epsilon_\mathrm{approx}^+ = 2^{-M}\left( 2^0 2^M + 2^0 2^M \right) = 2$, that is indeed our maximal error since we discarded the least significant qubit of two registers.
\end{example}

\begin{example}\label{SIex:one-plus-onethird}
Another, more interesting and more complex example of approximation occurs when $b_1/b_\mathrm{lead}$ or $b_2/b_\mathrm{lead}$ is not a power of 2. Consider single-digit inputs $n_1=n_2=1$, $a_1=a_2=0$, $b_1=b_\mathrm{lead}=1$, but $b_2=1/3$. Take $n_\mathrm{max} = n=4$. Suppose the first register contains $\ket{z_1}=\ket{1}$, as a representation of the floating point number $1$, and the second contains $\ket{z_2}=\ket{1}$, as a representation of $1/3$. We expect the sum to be an approximation of $4/3$. Indeed we get $n_\mathrm{out}=4$, since we need as many digits as possible to accurately represent a repeating binary fraction, obviously constrained by $n_\mathrm{out} \leq n_\mathrm{max}$. We start the algorithm iterations with $M=4$, in which case $w_1=[2^4]=[16]$, $w_2=[\round (2^4/3)] = [\round (16/3)]=[5]$, where $\round$ denotes the rounding. Square brackets are kept to emphasize that in the more general case of many input qubits, $w_1$ and $w_2$ would be vectors. Such weights do not respect the exit condition of Step~\refAlgoStep{SIalgo:add:iter-M} since $16+5 > 2^4-1$. Therefore $M$ is decreased to $3$, and $w_1=[2^3]=[8]$, $w_2=[\round (2^3/3)] = [\round (8/3)]=[3]=[1+2]$. Now, $w_2 = [1+2]$ means that the (only) digit in the second register affects both the last and second-to-last result qubit, namely the third and fourth result digit respectively. The calculation outcome then writes $2^{-M} b_\mathrm{lead} (w_{1,0} z_{1,0} + w_{2,0} z_{2,0}) = 1 + \frac{1+2}{8}$, so that $1/3$ has been converted into $\frac{1+2}{8}$, that is its best representing fraction given the number of available qubits. We let the reader check that the error estimate is consistent.
\end{example}

Figure~\ref{SIfig:uniform} shows an example of repeated application of the approximate sum on 3 qubits. The sum is well performing as long as the two inputs are similarly ranged. When the cumulative sum starts dominating the input vector, approximation rises. The effect is very pronounced as a consequence of the register size, which is as little as 3 qubits. Wider registers would delay the approximation.

\begin{figure}
    \centering
    \input{Figures/overflow-uniform.tex}
    \caption{Repeated applications of the sum. In each plot, we represent the sum of $j$ i.i.d. random variable, for $j=1, ..., 6$. As in Figure~\ref{SIfig:fcompatible}, the representation is in form of a CDF, so that the vertical axis contains probabilities, and the horizontal axis spans the values of the random variables. The input variable is a uniform discrete distribution on 3 qubits, represented in the plot $j=1$. The sum outputs are also 3-qubit variables. The algorithm applied is Algorithm~\ref{SIalgo:add}, in its improved version according to Remark~\ref{SIrem:add-improvements}. After iteration~3, the quantum circuit remains constant, while the error grows. At that stage, indeed, since the range of the result has already become significantly wider than that of the original input, summing again a new random variable does not contribute with any significant term.}
    \label{SIfig:uniform}
\end{figure}

The previous examples lead us to the formulation of the following correctness statement:

\begin{prop}[Floating point addition]\label{SIprop:add}
Let $\mathcal{F}_1$ and $\mathcal{F}_2$ be two CEFVs of size $n_1$ and $n_2$ resp., where $\mathcal{F}_j = (\ket{\cdot}_j, \mathcal{D}_{a_j,b_j}^{\mathrm{fp}} = \mathcal{D}_j, g_{a_j,b_j}^{\mathrm{fp}} = g_j, \epsilon_j^-, \epsilon_j^+)$ for $j=1,2$. Moreover, let $n_\mathrm{max}$ be a positive integer, and $b_\mathrm{lead}$ a non-null real number. Then Algorithm~\ref{SIalgo:add} outputs an $n_\mathrm{out}$-qubit CEFV $\mathcal{F}_\mathrm{out}= (\ket{\cdot}_\mathrm{out}, \mathcal{D}_{a_\mathrm{out},b_\mathrm{out}}^{\mathrm{fp}} = \mathcal{D}_\mathrm{out}, g_{a_\mathrm{out},b_\mathrm{out}}^{\mathrm{fp}} = g_\mathrm{out}, \epsilon_\mathrm{out}^-, \epsilon_\mathrm{out}^+)$
and a quantum circuit $U$ with the following characteristics:
\begin{enumerate}
\item $n_\mathrm{out} \leq n_\mathrm{max}$
\item the width of $U$ is at most $n_1+n_2 + n_\mathrm{out}+1$
\item in the basis made of 1- and 2-qubit gates, the gate count of $U$ is
$$\mathcal{O}(n_\mathrm{out}^2 + n_\mathrm{out}(n_1 + n_2))$$

\item for $j=1,2$, given any $y_j \in \mathbb R$ and $x_j \in \mathcal{D}_j$ such that $-\epsilon_j^- \leq y_j-x_j \leq \epsilon_j^+$, $U$ behaves as follows:
$$U \ket{g_1(x_1)} \ket{g_2(x_2)} \ket{0} \ket{0} = \ket{g_1(x_1)} \ket{g_2(x_2)} \ket{z_\mathrm{out}} \ket{0}$$
where $z_\mathrm{out}$ satisfies
$$-\epsilon_\mathrm{out}^- \leq y_1+y_2-g_\mathrm{out}^{-1}(z_\mathrm{out}) \leq \epsilon_\mathrm{out}^+.$$
In other words, $x_\mathrm{out} := g_\mathrm{out}^{-1}(z_\mathrm{out})$ approximates the sum of $y_1$ and $y_2$, within the tolerances of $\mathcal{F}_\mathrm{out}$.
\end{enumerate}
\end{prop}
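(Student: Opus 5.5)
The plan is to treat items 1--3 as bookkeeping consequences of how Algorithm~\ref{SIalgo:add} is built and of Prop.~\ref{SIprop:sbp}, and to concentrate the real work on item 4.

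\textbf{Termination and items 1--3.} First I will check that Algorithm~\ref{SIalgo:add} terminates.
\begin{itemize}
\item As $M \to -\infty$, every quantity $2^{j_k}2^M b_k/b_\mathrm{lead}$ tends to $0$, so all closest-integer roundings $w_{k,j_k}$ eventually vanish.
\item The exit condition of Step~\refAlgoStep{SIalgo:add:iter-M} is then trivially met, so the loop on $M$ stops after finitely many iterations.
\item Step~\refAlgoStep{SIalgo:add:tighten-nout} only lowers $n_\mathrm{out}$ from its initial value $n_\mathrm{max}$, which gives item 1.
\item Items 2 and 3 follow from Prop.~\ref{SIprop:sbp}, or more precisely Corollary~\ref{SIcor:sbp}, applied to $p$ with $n_\mathrm{in}=n_1+n_2$. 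The polynomial $p$ has degree $1$ and at most $n_1+n_2$ monomials. The circuit uses the $n_1+n_2$ input qubits, the $n_\mathrm{out}$ output qubits and one ancilla, and its gate count is $\mathcal O(n_\mathrm{out}^2+n_\mathrm{out}(n_1+n_2))$, uniformly in the coefficients $w_{k,j_k}$.
\end{itemize}

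\textbf{Item 4.} Fix basis inputs $z_k=g_k(x_k)$ with binary digits $z_{k,j}$. By Prop.~\ref{SIprop:sbp}, the output register contains $z_\mathrm{out}\equiv p(z_1,z_2)$ modulo $2^{n_\mathrm{out}}$.

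The first task is to show that no wrap-around occurs, so that $z_\mathrm{out}=p(z_1,z_2)$ exactly.
\begin{itemize}
\item The exit condition gives $|p(z_1,z_2)|\le\sum_{k,j}|w_{k,j}|\le 2^{n_\mathrm{out}}-1$.
\item If all weights are nonnegative, which is the case when $b_1/b_\mathrm{lead}$ and $b_2/b_\mathrm{lead}$ are both positive, then $p\in\{0,\dots,2^{n_\mathrm{out}}-1\}$ and the claim is immediate.
\item This sign issue is the step I expect to be the main obstacle. If some ratio $b_k/b_\mathrm{lead}$ is negative, $p$ could be negative. Then one must either argue that it is represented with the intended sign convention (the "sign qubit" remark in Step~\refAlgoStep{SIalgo:add:poly}), or reduce to the positive case by flipping the classical sign of the input CEFV, since $b<0$ is allowed and the flip is a purely classical operation.
\item I would first try to handle the main case of nonnegative weights cleanly, and then treat the negative case by absorbing signs classically in this way.
\end{itemize}

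Granting $z_\mathrm{out}=p$, we have $x_\mathrm{out}=g_\mathrm{out}^{-1}(z_\mathrm{out})=a_1+a_2+2^{-M}b_\mathrm{lead}\,p$. I then split the error as
$$y_1+y_2-x_\mathrm{out}=(y_1-x_1)+(y_2-x_2)+(x_1+x_2-x_\mathrm{out}).$$
The first two terms lie in $[-\epsilon_k^-,\epsilon_k^+]$ by hypothesis. For the third, I expand $x_k=a_k+b_k\sum_j z_{k,j}2^j$, so the offsets cancel and
$$x_1+x_2-x_\mathrm{out}=2^{-M}b_\mathrm{lead}\sum_{k=1}^2\sum_{j=0}^{n_k-1} z_{k,j}\,\delta_{k,j},\qquad \delta_{k,j}:=2^M\frac{b_k}{b_\mathrm{lead}}2^{j}-w_{k,j}.$$

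To bound this, I use that each $z_{k,j}\in\{0,1\}$ and that $b_\mathrm{lead}\delta=|b_\mathrm{lead}|\,\delta\sgn b_\mathrm{lead}$. Each summand $2^{-M}b_\mathrm{lead}z_{k,j}\delta_{k,j}$ then lies between
$$-2^{-M}|b_\mathrm{lead}|\max\{-\delta_{k,j}\sgn b_\mathrm{lead},0\}\quad\text{and}\quad 2^{-M}|b_\mathrm{lead}|\max\{\delta_{k,j}\sgn b_\mathrm{lead},0\}.$$
Summing over $k,j$ gives $-\epsilon^-_\mathrm{approx}\le x_1+x_2-x_\mathrm{out}\le\epsilon^+_\mathrm{approx}$, exactly matching Formula box~\ref{SIfb:add}. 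Adding the three bounds yields $-\epsilon^-_\mathrm{out}\le y_1+y_2-x_\mathrm{out}\le\epsilon^+_\mathrm{out}$, which is item 4.
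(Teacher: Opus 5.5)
Your treatment of items 1--3, the three-term decomposition, and the per-bit defects $\delta_{k,j}$ follow the paper's proof exactly. You add a termination check and write out the lower bound, which the paper leaves as ``similar''. When $b_1/b_\mathrm{lead}>0$ and $b_2/b_\mathrm{lead}>0$, all weights are nonnegative and your argument is complete.

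\textbf{The sign case.} The obstacle you flagged is a real gap, and the paper does not close it. Its proof simply asserts that the choice of $n_\mathrm{out}$ in Step~\refAlgoStep{SIalgo:add:tighten-nout} ``guarantees no overflow''. That follows from $\sum\abs{w_{k,j}}\le 2^{n_\mathrm{out}}-1$ only when every $w_{k,j}\ge 0$. Otherwise item 4 is false for Algorithm~\ref{SIalgo:add} as written.

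For a counterexample, take $n_1=n_2=1$, $a_1=a_2=0$, $\epsilon_k^\pm=0$, $b_1=1$, $b_2=-1$, $b_\mathrm{lead}=1$, $n_\mathrm{max}=2$.
\begin{itemize}
\item The initialization gives $M=1$, but the weights $\pm 2$ fail the exit test.
\item So $M=0$, $w_{1,0}=1$, $w_{2,0}=-1$, $n_\mathrm{out}=2$, $a_\mathrm{out}=0$, $b_\mathrm{out}=1$, $\epsilon^\pm_\mathrm{out}=0$.
\item For $z_1=0$, $z_2=1$ we have $x_1+x_2=-1$, but SBPEval writes $z_\mathrm{out}=(-1)\bmod 4=3$. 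Hence $g_\mathrm{out}^{-1}(z_\mathrm{out})=3$ and the error is $-4$.
\item Indeed $-1\notin\mathcal D_\mathrm{out}=\{0,1,2,3\}$.
\end{itemize}
The same failure occurs for $b_1=b_2=1$, $b_\mathrm{lead}=-1$, despite the $\sgn b_\mathrm{lead}$ in Formula box~\ref{SIfb:add}. The ``sign qubit'' phrase in Step~6 refers to a term that is not in the displayed polynomial, so it gives you nothing to work with.

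\textbf{Your proposed repair.} It does not work as described.
\begin{itemize}
\item The classical sign flip $(a_k,b_k)\mapsto(-a_k,-b_k)$ negates the encoded value, so you would be adding $-y_k$ instead of $y_k$.
\item Keeping the same value with scale $-b_k$ requires $a_k\mapsto a_k+b_k(2^{n_k}-1)$ together with complementing every qubit of $\ket{\cdot}_k$. That is a quantum operation and is not part of the algorithm.
\end{itemize}
A clean fix modifies the algorithm; it is not a proof of the stated one. Put $C:=\sum_{w_{k,j}<0}\abs{w_{k,j}}$ and evaluate $p+C$ instead of $p$. This can be done with a constant monomial in SBPEval, or equivalently by X-conjugating the negative-weight input qubits, since $wz=w+\abs{w}(1-z)$. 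Then set $a_\mathrm{out}:=a_1+a_2-2^{-M}b_\mathrm{lead}C$. Now $0\le p+C\le\sum\abs{w_{k,j}}\le 2^{n_\mathrm{out}}-1$, so $z_\mathrm{out}=p+C$ exactly and $g_\mathrm{out}^{-1}(z_\mathrm{out})=a_1+a_2+2^{-M}b_\mathrm{lead}\,p$. Your $\delta_{k,j}$ argument then goes through verbatim, with the same tolerances and the same width and gate-count bounds.

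For the statement as written, what you and the paper actually prove is item 4 under the extra hypothesis $b_k/b_\mathrm{lead}>0$ for $k=1,2$.
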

\begin{proof}
The first claim is trivial. For the second claim, use Prop.~\ref{SIprop:sbp}, setting $\ket{\cdot}_\mathrm{in} := \ket{\cdot}_1\ket{\cdot}_2$.
The third claim follows from Cor.~\ref{SIcor:sbp} since $\deg p = 1$, by the definition of the polynomial itself in step~\ref{SIalgo:add:poly} of Algorithm~\ref{SIalgo:add}.

Finally let us address the last claim. Write:
\begin{align*}
y_\mathrm{out}-x_\mathrm{out}
= & (y_1-x_1) + (y_2-x_2) +(x_1+x_2-x_\mathrm{out})\\
\leq & \epsilon_1^+ + \epsilon_2^+ + (x_1+x_2-x_\mathrm{out}),
\end{align*}
where we used Definition~\ref{SIdefn:fcompatible}. Now, let $z_j=g_j(x_j)$ for $j=1,2$. The choice of $n_\mathrm{out}$ made in step~\refAlgoStep{SIalgo:add:tighten-nout}, guarantees no overflow in the polynomial evaluation, so that $z_\mathrm{out} = \sum_{j_1=0}^{n_1-1} w_{1,j_1} z_{1,j_1} + \sum_{j_2=0}^{n_2-1} w_{2,j_2} z_{2,j_2}$. Then
\begin{align*}
x_1+x_2-x_\mathrm{out}
=& (a_1 + b_1 z_1) + (a_2 + b_2 z_2) - (a_1 + a_2 + b_\mathrm{out} z_\mathrm{out})\\
=& (b_1 z_1 + b_2 z_2 - 2^{-M} b_\mathrm{lead} z_\mathrm{out})\\
=& \sum_{k=1}^2 \left( b_k z_k - 2^{-M} b_\mathrm{lead} \sum_{j_k=0}^{n_k-1} w_{k,j_k} z_{k,j_k} \right)\\
=& 2^{-M} b_\mathrm{lead} \sum_{k=1}^2 \left(2^M \frac{b_k}{b_\mathrm{lead}} z_k - \sum_{j_k=0}^{n_k-1} w_{k,j_k} z_{k,j_k} \right)\\
=& 2^{-M} b_\mathrm{lead} \sum_{k=1}^2 \sum_{j_k=0}^{n_k-1} \left(2^M \frac{b_k}{b_\mathrm{lead}} 2^{j_k} - w_{k,j_k} \right) z_{k,j_k}\\
\leq& 2^{-M} \abs{b_\mathrm{lead}} \sum_{k=1}^2 \sum_{j_k=0}^{n_k-1} \max \left\{ \left( 2^M \frac{b_k}{b_\mathrm{lead}} 2^{j_k} - w_{k,j_k} \right) \sgn b_\mathrm{lead}, 0 \right\}.
\end{align*}
The proof for $\epsilon^{-}_\mathrm{out}$ is similar.
\end{proof}

\begin{remark}
The former proof holds for any $[w_{k,j_k}]_{k,j_k}$ however defined, as long as they guarantee the no-overflow condition. Our choice in step~\refAlgoStep{SIalgo:add:weights} is `optimal' in the sense that (a) $M$ is calculated in such a way to fully exploit the precision achievable in $n_\mathrm{max}$ qubits, (b) the definition of the weights minimizes and balances $\epsilon_\mathrm{approx}^+$ and $\epsilon_\mathrm{approx}^-$. Regarding (a), note that the initialization in step~\refAlgoStep{SIalgo:add:init-M} guarantees
$$\sum_{j_1=0}^{n_1-1} \abs{w_{1,j_1}} + \sum_{j_2=0}^{n_2-1} \abs{w_{2,j_2}} \geq 
2^{n_\mathrm{out}-1}-1,$$
and at the same time guarantees that step~\refAlgoStep{SIalgo:add:iter-M} requires few iterations. The proof of these facts is left to the reader.
\end{remark}

\begin{remark}\label{SIrem:blead-sum}
The choice of $b_\mathrm{lead}$ both affects the approximation degree and the circuit complexity of the sum. It is clear from the examples above, that a natural choice is picking $b_\mathrm{lead}$ equal to $b_1$ or $b_2$. In this way, the weights of the corresponding register become powers of 2, and the circuit simplifies significantly. Still, choosing $b_1$ against $b_2$ is again a non trivial decision. Since error estimates and complexity bounds can be calculated, the greedy approach of trying both alternatives is largely feasible. It should be noted thought that the choice of $b_\mathrm{lead}$ does not only impact the performance of individual sum where it is introduced, but also defines the output variable, thus conditioning the cost of all subsequent interactions of the output itself. In a highly complex algorithm, the choice of the scaling factors in intermediate registers, is an optimization task that goes beyond the purposes of the current work.
\end{remark}

\begin{remark}\label{SIrem:add-improvements}
Some improvements can be applied to the previous Algorithm.
\begin{description}
    \item[(a)] If all weights in both $w_1$ and $w_2$ share a power of 2 as a common factor, they can be divided by such factor that can be accounted for classically, thus reducing the complexity of the circuit.
    \item[(b)] After applying \textbf{(a)}, if the terms in (say) $\mathcal{F}_2$ are all multiple of $2^l$, for some positive integer $l$, and at the same time $w_{1, j_1} = 2^{j_1}$ for all $j_1=0,...,l-1$, then the $l$ least significant qubits of the sum are not impacted by $\mathcal{F}_2$, and can be copied directly from the corresponding digits in $\mathcal{F}_1$ via $\mathtt{cx}$'s. Notice that the condition on $w_{1, j_1}$ is likely verified in practice, because it is implied by $b_\mathrm{lead} = b_1$.
    \item[(c)] The previous \textbf{(b)} can be rewritten in a more general form. If only one input qubit of one input variable affects the least significant output qubit, then there is no need to make a sum on that qubit. Indeed, only one input qubit impacts the value of the output qubit, and no carryover can be present since the digit is the least significant. In such case, then, the value can be simply copied to the output qubit via a $\mathtt{cx}$. Afterwards, \textbf{(a)} can be invoked, and \textbf{(c)} again, recursively, until none of the two performs any action.
    \item[(d)] The rounding in step~\refAlgoStep{SIalgo:add:weights} can be modified to map a 0.5 decimal into 0 instead of 1. This does not affect the error, but reduces weights.
    \item[(e)] An iteration of step~\refAlgoStep{SIalgo:add:iter-M} is equivalent to rounding all weights to a power of 2. At the last iteration, a more refined technique is to round only \textit{some} weights, sufficiently many to satisfy the exit condition in step~\refAlgoStep{SIalgo:add:iter-M}, but as little as possible to limit approximation.
\end{description}
\end{remark}

\begin{remark}
For any given positive integer $L$, in the attempt to increase the result precision it is possible to calculate $L$ additional binary digits by resorting to $L$ ancilla qubits, and then uncompute these additional qubits by means of a subtraction modulo $2^L$, before returning the result. %
\end{remark}

\begin{remark}\label{SIrem:linearcomb}
Given our definition of the CEFV, one can trivially provide the appropriate \ClassicalAdd and \ClassicalProd operations, that take a CEFV and a real number in input, and output a new CEFV operating on the same register, with updated characteristics $a$ and $b$, and with no need for manipulating the quantum states. Combining \ClassicalProd with \Add, one can calculate arbitrary linear combinations \LinearCombination of two CEFVs, and specifically the difference.
\end{remark}

\begin{remark}\label{SIrem:multireg-sum}
The \Add and \LinearCombination algorithms easily extend to multiple input CEFVs, by defining $p$ appropriately in step~\refAlgoStep{SIalgo:add:weightedsum}. The multi-variable algorithm has in general smaller approximation error, compared to the repeated application of pairwise sums.
\end{remark}

\begin{remark}
The classical processing in the algorithm scales polynomially in time with the input size. This observation is vital as an exponential scaling would undermine the benefit of quantum parallel calculation.
\end{remark}

\subsection{Multiplication}

\begin{algorithm}
\small
\Fn{\Multiply}{
\Input{Two CEFVs $\mathcal{F}_1$ and $\mathcal{F}_2$; a positive integer $n_\mathrm{max}$; a real number $b_\mathrm{lead} \neq 0$}
\Output{An $n_\mathrm{out}$-qubit CEFV $\mathcal{F}_\mathrm{out}$; a quantum circuit $U$}
\BlankLine
Set $c_1 := b_1 a_2$ and $c_2 := b_2 a_1$.
Set $n_\mathrm{prod} := n_1 + n_2+1$ and $c_\mathrm{prod} := b_1 b_2$ \;
Initialize $n_\mathrm{out} \in \mathbb N$ as $n_\mathrm{out} \leftarrow n_\mathrm{max}$\;
Initialize $M \in \mathbb Z$ as
\begin{algomathdisplay}
M \leftarrow  \left\lceil \log_2 \left(2^{n_\mathrm{out}} -1\right) - \log_2 \left( 
\sum_{k=1,2, \mathrm{prod}} (2^{n_k}-1) \abs{\frac{c_k}{b_\mathrm{lead}}}
\right) \right\rceil
\end{algomathdisplay}
For all $k \in \{ 1, 2, \mathrm{prod}\}$ and for all $j_k \in \{ 0, ..., n_k-1\}$, define $w_{k,{j_k}} \in \mathbb{Z}$ as the closest-integer rounding of $2^{j_k} 2^M c_k / b_\mathrm{lead}$ \;
Repeatedly decrease $M$ by $1$ and recalculate the weights as in the previous step, until
\begin{algomathdisplay}
\sum_{j_1=0}^{n_1-1} \abs{w_{1,j_1}} + \sum_{j_2=0}^{n_2-1} \abs{w_{2,j_2}} + \sum_{j_1=0}^{n_1-1}\sum_{j_2=0}^{n_2-1} \abs{w_{\mathrm{prod},j_1+j_2}} \leq 
2^{n_\mathrm{out}}-1
\end{algomathdisplay}
Tighten the bound in the previous inequality by lowering $n_\mathrm{out}$ as long as the inequality holds \;
Define the following integer-coefficient polynomial
\begin{algomathdisplay}
\begin{aligned}
p(z_1,z_2) :=&
\sum_{j_1=0}^{n_1-1} z_{1,j} w_{1,j_1} +
\sum_{j_2=0}^{n_2-1} z_{2,j} w_{2,j_2} +
\sum_{j_1=0}^{n_1-1} \sum_{j_2=0}^{n_2-1} z_{1,j_1}z_{2,j_2} w_{\mathrm{prod},j_1+j_2}
\end{aligned}
\end{algomathdisplay}\label{SIalgo:prod:npoly} 
Apply $\SBPEval(n_1+n_2, n_\mathrm{out}, p)$
to obtain the circuit $U$ that evaluates $p$ into a register $\ket{\cdot}_\mathrm{out}$ \;
Set the characteristics of $\mathcal{F}_\mathrm{out}$ as in Formula box~\ref{SIfb:mult}
}
\caption{Multiplication between CEFVs with no-overflow guarantee. Refer to Prop.~\ref{SIprop:prod} for the details.}\label{SIalgo:prod}
\end{algorithm}

\begin{formulabox}
\small
For $k=1,2$, let:
$$
v_{k,A}^\pm := 
\begin{bmatrix}
        [\epsilon_{3-k}^+ + a_{3-k} + b_{3-k} (2^{n_{3-k}}-1)] \epsilon_k^\pm\\
        [\epsilon_{3-k}^+ + a_{3-k}] \epsilon_k^\pm \\
        [\epsilon_{3-k}^- - a_{3-k} - b_{3-k} (2^{n_{3-k}}-1)] \epsilon_k^\mp \\
        [\epsilon_{3-k}^- - a_{3-k}] \epsilon_k^\mp \\
        0
    \end{bmatrix},
    \quad
    v_{k,B}^\pm :=
    \begin{bmatrix}
    [a_{3-k} + b_{3-k} (2^{n_{3-k}}-1)] \epsilon_k^\pm\\
    [a_{3-k}] \epsilon_k^\pm \\
    [- a_{3-k} - b_{3-k} (2^{n_{3-k}}-1)] \epsilon_k^\mp \\
    [- a_{3-k}] \epsilon_k^\mp \\
    0
\end{bmatrix}
$$
Then define:
$$
\begin{aligned}
    a_\mathrm{out} :=& a_1 a_2\\
    b_\mathrm{out} :=& b_\mathrm{lead}\\
    \epsilon^\pm_\mathrm{approx} :=
    & 2^{-M} \abs{b_\mathrm{lead}} \sum_{k=1}^{2} \sum_{j_k=0}^{n_k-1} \max \left\{ \pm \left( \frac{2^{j_k} 2^M c_k}{b_\mathrm{lead}} - w_{k, j_k} \right) \sgn b_\mathrm{lead}, 0 \right\} + \\
    &+ 2^{-M} \abs{b_\mathrm{lead}} \sum_{j_1=0}^{n_1-1}\sum_{j_2=0}^{n_2-1} \max \left\{\pm \left( \frac{2^{j_1+j_2} 2^M c_{\mathrm{prod}}}{b_\mathrm{lead}} - w_{\mathrm{prod}, j_1+j_2} \right) \sgn b_\mathrm{lead}, 0 \right\}\\
    \epsilon_\mathrm{propag}^\pm :=& \min \{ \max v_{1,A}^\pm + \max v_{2,B}^\pm, \max v_{2,A}^\pm + \max v_{1,B}^\pm \} \quad \text{($\star$)}\\
    \epsilon^+_\mathrm{out} :=& \epsilon^+_\mathrm{propag} + \epsilon^+_\mathrm{approx}\\
    \epsilon^-_\mathrm{out} :=& \epsilon^-_\mathrm{propag} + \epsilon^-_\mathrm{approx}
\end{aligned}
$$
{\footnotesize {($\star$)} The notation $\max v$ represents the maximum of the (five) entries of the vector $v$.}

\caption{Output variable characteristics for the multiplication (refer to Algorithm~\ref{SIalgo:prod}).}\label{SIfb:mult}
\end{formulabox}

\begin{prop}[Floating point multiplication]\label{SIprop:prod}
Let $\mathcal{F}_1$ and $\mathcal{F}_2$ be two CEFVs of size $n_1$ and $n_2$ resp., where $\mathcal{F}_j = (\ket{\cdot}_j, \mathcal{D}_{a_j,b_j}^{\mathrm{fp}} = \mathcal{D}_j, g_{a_j,b_j}^{\mathrm{fp}} = g_j, \epsilon_j^-, \epsilon_j^+)$ for $j=1,2$. Moreover, let $n_\mathrm{max} \geq n_1, n_2$.
Then Algorithm~\ref{SIalgo:prod} outputs an $n_\mathrm{out}$-qubit CEFV $\mathcal{F}_\mathrm{out} = (\ket{\cdot}_\mathrm{out}, \mathcal{D}_{a_\mathrm{out},b_\mathrm{out}}^{\mathrm{fp}} = \mathcal{D}_\mathrm{out}, g_{a_\mathrm{out},b_\mathrm{out}}^{\mathrm{fp}} = g_\mathrm{out}, \epsilon_\mathrm{out}^-, \epsilon_\mathrm{out}^+)$
and a quantum circuit $U$ with the following characteristics:
\begin{enumerate}
\item $n_\mathrm{out} \leq n_\mathrm{max}$
\item the width of $U$ is at most $n_1 + n_2 + n_\mathrm{out} + 1$
\item in the basis made of 1- and 2-qubit gates, the gate count of $U$ is $\mathcal{O}(n_\mathrm{out} n_1 n_2)$
\item for $j=1,2$, given any $y_j \in \mathbb R$ and $x_j \in \mathcal{D}_j$ such that $-\epsilon_j^- \leq y_j-x_j \leq \epsilon_j^+$, $U$ behaves as follows:
$$U \ket{g_1(x_1)} \ket{g_2(x_2)} \ket{0} \ket{0} = \ket{g_1(x_1)} \ket{g_2(x_2)} \ket{z_\mathrm{out}} \ket{0}$$
where $z_\mathrm{out}$ satisfies
$$-\epsilon_\mathrm{out}^- \leq y_1 y_2-g_\mathrm{out}^{-1}(z_\mathrm{out}) \leq \epsilon_\mathrm{out}^+.$$
In other words, $x_\mathrm{out} := g_\mathrm{out}^{-1}(z_\mathrm{out})$ approximates the product of $y_1$ and $y_2$, within the tolerances of $\mathcal{F}_\mathrm{out}$.

\end{enumerate}
\end{prop}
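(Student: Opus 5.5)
The plan follows the proof of Prop.~\ref{SIprop:add}. Claim~1 is immediate from the initialization of $n_\mathrm{out}$ and its later decrease. Claim~2 follows from Prop.~\ref{SIprop:sbp} with $\ket{\cdot}_\mathrm{in} := \ket{\cdot}_1\ket{\cdot}_2$, which needs a single ancilla. For claim~3, the polynomial in step~\ref{SIalgo:prod:npoly} has degree $2$ and $n_1+n_2+n_1n_2$ monomials, so Cor.~\ref{SIcor:sbp} gives $\mathcal{O}(n_\mathrm{out}^2+n_\mathrm{out}n_1n_2)$ gates. To absorb the $n_\mathrm{out}^2$ term I would use $n_\mathrm{out}\le n_\mathrm{max}$ together with a reduction, as in Cor.~\ref{SIcor:sbp-sumandprod}, to $n_\mathrm{out}\le\card(m\in p)$: output qubits beyond the number of monomials can be set trivially or dropped without loss.

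Claim~4 carries the content. I split the error as
\begin{equation*}
y_1y_2-x_\mathrm{out}=(y_1y_2-x_1x_2)+(x_1x_2-x_\mathrm{out}),
\end{equation*}
to be bounded respectively by $\epsilon^\pm_\mathrm{propag}$ and $\epsilon^\pm_\mathrm{approx}$.

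\emph{Approximation term.} The exit condition of the $M$-loop, after tightening $n_\mathrm{out}$, ensures that the sum of absolute weights is at most $2^{n_\mathrm{out}}-1$. Hence the SBP evaluation has no overflow, and $z_\mathrm{out}$ equals $p(z_1,z_2)$ exactly. Then I expand
\begin{equation*}
x_1x_2=a_1a_2+a_2b_1z_1+a_1b_2z_2+b_1b_2z_1z_2 ,
\end{equation*}
write $z_k=\sum_j 2^j z_{k,j}$ and $z_1z_2=\sum_{j_1,j_2}2^{j_1+j_2}z_{1,j_1}z_{2,j_2}$, and compare term by term with $b_\mathrm{out}z_\mathrm{out}$, scaled by $2^{-M}b_\mathrm{lead}$. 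Each coefficient difference has the form $(2^{j}2^Mc_k/b_\mathrm{lead}-w_{k,j})$ multiplied by a boolean. Bounding each one by its positive or negative part, exactly as in the addition proof, yields $\epsilon^\pm_\mathrm{approx}$.

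A bookkeeping point needs checking here. The box sets $b_\mathrm{out}=b_\mathrm{lead}$, while the weights carry the factor $2^M$, so consistency requires $b_\mathrm{out}=2^{-M}b_\mathrm{lead}$, as in the addition. I would adopt that convention and note the correction; otherwise the identity $x_\mathrm{out}=a_1a_2+2^{-M}b_\mathrm{lead}\,p$ fails.

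\emph{Propagation term.} This is the main obstacle. Write $y_k=x_k+\delta_k$ with $\delta_k\in[-\epsilon_k^-,\epsilon_k^+]$. I would use the two asymmetric decompositions
\begin{equation*}
y_1y_2-x_1x_2=\delta_1y_2+x_1\delta_2=\delta_2y_1+x_2\delta_1 ,
\end{equation*}
which explain the $\min$ of two choices in ($\star$).

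For the first decomposition, $y_2$ ranges over $[a_2+\min(0,b_2(2^{n_2}-1))-\epsilon_2^-,\ a_2+\max(0,b_2(2^{n_2}-1))+\epsilon_2^+]$, and $\delta_1y_2$ is bilinear in the box $(\delta_1,y_2)$. Its maximum is therefore attained at corners, and these corners are exactly the nonzero entries of $v_{1,A}^+$. The entry $0$ accounts for the degenerate case where the sign pattern makes the bound nonpositive. Likewise, $x_1\delta_2$ has $x_1\in[a_1,a_1+b_1(2^{n_1}-1)]$, with no tolerance enlargement, and its corner maxima give $v_{2,B}^+$. Taking the better of the two decompositions gives the upper bound $\epsilon^+_\mathrm{propag}$; the lower bound is symmetric, using the $\mp$ entries.

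The delicate part is verifying that the listed corners cover every sign configuration of $\delta$ and of the interval endpoints, including $b_k<0$ and both ends of the $\epsilon$ intervals. I expect this case check, rather than the algebra, to need the most care. Adding the two bounds finishes claim~4.
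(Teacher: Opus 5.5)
Your route is the paper's: the split $(y_1y_2-x_1x_2)+(x_1x_2-x_\mathrm{out})$, the approximation term handled as in Prop.~\ref{SIprop:add}, and the two asymmetric decompositions giving $\max v^\pm_{1,A}+\max v^\pm_{2,B}$ and $\max v^\pm_{2,A}+\max v^\pm_{1,B}$. Your bilinear-corner argument is a repackaging of the paper's case split on the sign of $y_1-x_1$, followed by its bound on $\max\{\pm y_2,0\}$. You are also right that Formula box~\ref{SIfb:mult} must read $b_\mathrm{out}=2^{-M}b_\mathrm{lead}$, since the factor $2^{-M}$ in $\epsilon^\pm_\mathrm{approx}$ presupposes it.

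However, one step fails in general, and you inherit it from the paper's proof (via the addition proof it cites). From $\sum\abs{w}\le 2^{n_\mathrm{out}}-1$ you conclude that there is no overflow and $z_\mathrm{out}=p(z_1,z_2)$. That inequality only gives $\abs{p}\le 2^{n_\mathrm{out}}-1$. The SBP evaluation returns $p \bmod 2^{n_\mathrm{out}}$ in an unsigned register, so you also need $p\ge 0$. This is guaranteed only when every weight is nonnegative, e.g.\ when $c_k/b_\mathrm{lead}\ge 0$ for $k=1,2,\mathrm{prod}$. A negative offset already breaks it. Take $n_1=n_2=1$, $a_1=0$, $a_2=-1$, $b_1=b_2=b_\mathrm{lead}=1$, zero tolerances, and $n_\mathrm{max}=3$. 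The algorithm gives $M=0$, $w_{1,0}=-1$, $w_{2,0}=0$, $w_{\mathrm{prod},0}=1$, $n_\mathrm{out}=2$, $a_\mathrm{out}=0$ and $\epsilon^\pm_\mathrm{out}=0$. For $z_1=1$, $z_2=0$ (so $x_1x_2=-1$) you get $p=-1$, $z_\mathrm{out}=3$ and $x_\mathrm{out}=3$. So claim~4 is false as stated, and the failure has nothing to do with the propagation bookkeeping you flagged as delicate.

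The repair is an offset shift. Let $N$ be the sum of the absolute values of the negative coefficients of $p$. Evaluate $p+N$ instead of $p$ (add $N$ as a constant term of the polynomial), and set $a_\mathrm{out}:=a_1a_2-2^{-M}b_\mathrm{lead}N$. Then $0\le p+N\le\sum\abs{w}\le 2^{n_\mathrm{out}}-1$ and $x_\mathrm{out}=a_1a_2+2^{-M}b_\mathrm{lead}\,p$ as before, so the rest of your argument goes through verbatim. Alternatively, add the sign hypothesis to the statement.

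Separately, your way of absorbing $n_\mathrm{out}^2$ in claim~3 does not work. In Cor.~\ref{SIcor:sbp-sumandprod} the weights are unscaled powers of two, so the result genuinely fits in few bits. Here $M$ is chosen precisely so that the rounded weights fill the output register. For example, $a_1=a_2=0$, $n_1=n_2=1$, $b_1b_2/b_\mathrm{lead}=1/3$ and large $n_\mathrm{max}$ give $w_{\mathrm{prod},0}=\round(2^M/3)$, whose nonzero bits spread across the whole register, so no output qubit can be dropped. Cor.~\ref{SIcor:sbp} only yields $\mathcal{O}(n_\mathrm{out}^2+n_\mathrm{out}n_1n_2)$. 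The paper drops the quadratic term without comment, so either keep it, as in Prop.~\ref{SIprop:add}, or assume $n_\mathrm{out}=\mathcal{O}(n_1n_2)$.
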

\begin{proof}
The proof is similar to that of Prop.~\ref{SIprop:add}. 
The first claim is again trivial. Use Prop.~\ref{SIprop:sbp} for the second claim and Cor.~\ref{SIcor:sbp} for the third one, together with the facts that $\deg p = 2$ and the number of monomials is $\mathcal O (n_1 n_2)$.

Concerning the last claim, write:
$$
y_\mathrm{out}-x_\mathrm{out}
= (y_1 y_2 - x_1 x_2) + (x_1 x_2 - x_\mathrm{out}).
$$
The latter term is managed as in the proof of Prop.~\ref{SIprop:add}, giving rise to $\epsilon_\mathrm{approx}^\pm$.
Consider the former term, and decompose it as $(y_1 y_2 - x_1 y_2) + (x_1 y_2 - x_1 x_2)$. Let us estimate the first part:
$$
y_2 (y_1 - x_1) \leq \max \{y_2 (y_1 - x_1), 0 \} \leq
\begin{cases}
\max \{y_2, 0 \} \epsilon^+_1 &\text{where } y_1 - x_1 \geq 0, \\
\max \{- y_2 , 0 \} \epsilon^-_1 &\text{where } y_1 - x_1 < 0.
\end{cases}
$$
In turn,
$$\max \{\pm y_2, 0 \} = \max \{(\pm y_2 \mp x_2) \pm x_2, 0 \} \leq \max \{ \epsilon_2^\pm \pm a_2 \pm b_2 (2^{n_2}-1), \epsilon_2^\pm \pm a_2, 0 \},$$
so that $ y_2 (y_1 - x_1) \leq \max v_{1,A}^+ $.

Similarly, $x_1 (y_2 - x_2)$ gives $x_1 (y_2 - x_2) \leq \max v_{2,B}^+ $, and overall we derive that $y_1 y_2 - x_1 x_2 \leq \max v_{1,A}^+ + \max v_{2,B}^+$. By symmetry, decomposing $y_1 y_2 - x_1 x_2$ this time as $(y_1 y_2 - y_1 x_2) + (y_1 x_2 - x_1 x_2)$, we also obtain $y_1 y_2 - x_1 x_2 \leq \max v_{2,A}^+ + \max v_{1,B}^+$. This completes the proof for $\epsilon_\mathrm{out}^+$.

The proof for $\epsilon_\mathrm{out}^-$ flows alike.
\end{proof}

\section{Extended results}
In this Section, we provide additional details about the results presented in the paper. All experiments are run on the Qiskit \cite{qiskit} noiseless `qasm' simulator, and demonstrate the main properties of the introduced CEFVs. All simulations test Algorithm~\ref{SIalgo:add}, in its improved version according to Remark~\ref{SIrem:add-improvements}, unless otherwise stated. The \SBPEval routine was implemented following Ref.~\cite{seidel_efficient_2022}, as no implementation was readily available at the time of running the experiments. We acknowledge that the Qrisp package~\cite{qrisp} now includes \SBPEval.

\subsection{Choice of the output scaling factor}
In this Subsection we study how the choice of $b_\mathrm{lead}$ affects the algorithm, in terms of circuit depth, error and number of output qubits. We focus on the out-of-place sum described in Algorithm~\ref{SIalgo:add}, with all the improvements introduced in Remark~\ref{SIrem:add-improvements}. We start the investigation with two examples.

\begin{example}\label{SIex:b-lead1}
Like in Example~\ref{SIex:one-plus-onethird}, consider single-digit inputs, namely $n_1=n_2=1$, and a $4$-qubit output register, $n_\mathrm{out}=4$. Again, consider $a_1=a_2=0$, $b_1=1$, $b_2=1/3$. In Figure~\ref{SIfig:b-lead1}, we first observe that the depth is a step function, and partitions the $b_\mathrm{lead}$ space in areas where it is constant, as highlighted by the white and pink vertical strips. This is a consequence of the fact that weights, resulting from rounding, are constant in each band.
Looking at the right plot, in this setting, the error curve is strictly convex in each band. Consequently, there exists a unique point in the band that best represents it. This becomes clear if we specifically consider $b_\mathrm{lead}=b_2$ or $b_\mathrm{lead} = b_1$. Starting from $b_\mathrm{lead}=b_2=1/3$, marked with a black dot, no error is obtained. Indeed, the first register can contain values $\mathcal{D}_{0,1}^{\mathrm{fp}}=\{0,1\}$ and the second $\mathcal{D}_{0,1/3}^{\mathrm{fp}}=\{ 0, 1/3\}$. All these values are multiple integers of $b_\mathrm{lead}=1/3$, and so is their sum, of course. Additionally, their integer sums all fit into the output register without need for re-scaling, implying that the representation is exact.
Now consider the squared marker, representing $b_\mathrm{lead} = b_1 = 1$. In this case, the solution is heavily approximate as detailed in the Example~\ref{SIex:one-plus-onethird}.
Finally, consider $b_\mathrm{lead} = 2/3 = 2 b_2$ and $b_\mathrm{lead} = 1/2 = b_1/2$, marked with an empty circle and empty square respectively.
Since $b_\mathrm{lead} = 2/3 = 2 b_2$ can be obtained as $b_2$ times a power of 2, the $2^M$ normalization factor in the algorithm brings it back to the exact same result, and the same holds for $b_\mathrm{lead} = 1/2 = b_1/2$.
The number of qubits needed for the optimal choice $b_\mathrm{lead}=1/3$ is as low as $3$, since $1+1/3=4 b_\mathrm{lead}$, and the number $4$ requires $3$ qubits to be stored. On the contrary, all other scenarios, being approximate, exploit all the available qubits for the best approximation, namely $4$.
\end{example}

\begin{example}\label{SIex:b-lead2}
Consider now the same setting as in the previous Example~\ref{SIex:b-lead1}, keeping $b_1=1$, but setting $b_2=9/10$ instead of $1/3$.
To guarantee that the sum is exact, one should represent the common factor 90 as an integer in the available qubits, but the number of output qubits $n_\mathrm{out}=4$ does not support the representation of 90, making it impossible to perform an exact sum. In this case, the optimal $b_\mathrm{lead}$ is $b_1=1$, as shown in Figure~\ref{SIfig:b-lead2}.
\end{example}

\begin{figure}[p]
\input{Figures/blead.tex}
\caption{Effect of $b_\mathrm{lead}$ on the circuit depth, the error and the number of output qubits, for the out-of-place sum described in Algorithm~\ref{SIalgo:add} and Remark~\ref{SIrem:add-improvements}.
The case is taken from Example~\ref{SIex:b-lead1}. Inputs are single-digit, namely $n_1=n_2=1$, and $a_1=a_2=0$, $b_1=1$, $b_2=1/3$. The output has $n_\mathrm{out}=4$ qubits.
In the left plot, the circuit depth (different lines correspond to different optimization levels in the qiskit transpiler, ranging from 0 to 3 from top to bottom), measured in the $[\mathtt{u3}, \mathtt{cx}]$ basis. In the right plot, the maximal approximation associated with each $b_\mathrm{lead}$, estimated as $\epsilon_{\mathrm{approx}}^++\epsilon_{\mathrm{approx}}^-$, in blue, referred to the left axis, and the number of output qubits, in dashed green, referred to the right axis.
The depth is a step function, whose changes are marked with the alternation of white and pink vertical bands in the plots.
The black dot represents the choice $b_\mathrm{lead}=b_2=1/3$, while the squared marker represents $b_\mathrm{lead} = b_1 = 1$.
The empty circle (empty square) highlights $b_\mathrm{lead} = 2/3 = 2 b_2$ (resp., $b_\mathrm{lead} = 1/2 = b_1/2$).
}\label{SIfig:b-lead1}
\end{figure}

\begin{figure}
\input{Figures/blead2.tex}
\caption{Plots have the same structure as in Figure~\ref{SIfig:b-lead1}. In this case $b_1=1$ and $b_2=9/10$, thus representing Example~\ref{SIex:b-lead2}.
}\label{SIfig:b-lead2}
\end{figure}

\begin{figure}
    \input{Figures/blead-summary.tex}
\caption{Effect of the choice of $b_\mathrm{lead}$ on errors, for the sum of two CEFVs of size $n$, in an output register of same size $n$. (a) $n=8$. In the top plot, $b_1$ is kept equal to 1, $b_2$ varies from $10^{-3}$ to $10^4$, and $b_\mathrm{lead}$ in $[1,2]$. %
The color intensity represents the relative error, namely $\frac{\epsilon^++\epsilon^-}{b_1 2^n + b_2 2^n}$. In the bottom plot, a focus on three specific choices of $b_\mathrm{lead}$, namely $b_1$ (i.e. 1), $b_2$, and the optimal $b_\mathrm{opt}$. %
(b) Zoom on the range $b_2 \in [1,2]$, this time in linear scale (the corresponding area in (a) is marked with a dotted rectangle).  (c) Same as (a), but $n=4$.} \label{SIfig:b-lead-summary}
\end{figure}

The two examples above show that the choice of the output $b_\mathrm{lead}$ has a direct effect on the approximation level, and sometimes also on the circuit depth and qubit count. Remarkably, in Example~\ref{SIex:b-lead1}, $b_2$ is the optimal choice, while in the Example~\ref{SIex:b-lead2}, $b_1$ is.

For a more comprehensive view, a numerical study of the relative error $\frac{\epsilon_{\mathrm{approx}}^++\epsilon_{\mathrm{approx}}^-}{b_1 2^n + b_2 2^n}$ was introduced in the main manuscript:
Fig.~2 therein, is here extended into Figs.~\ref{SIfig:b-lead-summary}{(a)}, {(b)} and {(c)}.

As already observed in the paper body, the best between $b_1$ and $b_2$ is a good choice for $b_\mathrm{lead}$, making in-place calculation a viable option, and confirming our observation in Examples~\ref{SIex:b-lead1} and~\ref{SIex:b-lead2}. Moreover, such fact justifies the applicability of the improvements (b) and (c) in Remark~\ref{SIrem:add-improvements}. We demonstrate the dramatic impact of Remark~\ref{SIrem:add-improvements} in Subsection~\ref{SIsubsec:add-improvements}.

\subsection{Advantages of the offset}
Figure~\ref{SIfig:step} is an extension of Fig.~3 in the main manuscript, highlighting the improvement introduced by the offset, in terms of precision achieved in a given number of qubits, as well as the limitation of error propagation after repeated arithmetic operations.

\begin{figure}
    \small
    \centering
    {(a) With no offset}\\
    \input{Figures/overflow-step-nooffset.tex}
    \\[4mm]
    {(b) With offset}\\
    \input{Figures/overflow-step-offset.tex}
    \caption{CDF for repeated applications of the sum. The plot representation is as in Figure~\ref{SIfig:fcompatible}. Here the same input random variable (providing 6 or 7 both with 50\% probability) is represented (a) with no offset, in 3 qubits, and (b) with offset, in 1 qubit. The result variables are constrained to 3 qubits at most. The different effect on error propagation is clearly visible. In the case with offsets, the vertical dashed grid of representable numbers is denser in the relevant region, implying lower approximation.}
    \label{SIfig:step}
\end{figure}

\subsection{Effects of the algorithmic improvements and in-place implementation}\label{SIsubsec:add-improvements}
Remark~\ref{SIrem:add-improvements} contains some improvements to Algorithm~\ref{SIalgo:add}. Despite being simple to explain, such modifications provide a significant boost to the performance, particularly in terms of circuit depth. 

Figure~\ref{SIfig:improvements} demonstrates the behavior of depth and error, where $b_1=1$ is fixed, and $b_2 \in \{1.4, \allowbreak 3, \allowbreak 0.25,\allowbreak 1/(2^5-1)\}$ is varied in the different test cases. Four scenarios are considered, rising from the combination of two parameters, as the sum can be performed on an external register rather than in place, and the improvements in Remark~\ref{SIrem:add-improvements} can be active or inactive.

The last pathological example shows an improvement of $99\%$ in depth, from $187$ to $2$, for $n_\mathrm{out}=9$ when not run in-place. When performing addition in-place, depth is usually better, and error is comparable. This observation, complemented with the fact that in-place operations also have reduced width, as they do not require an additional output register, determines that in-place is generally the preferable choice.

\begin{figure}
    \centering
    \include{Figures/improvements.tex}
    \vspace{-1cm}
    \caption{Effect of algorithmic improvements suggested in Remark~\ref{SIrem:add-improvements}. Each plot tests four cases, given by the combinations of two parameters: result on external register (`ext') vs in-place (`inpl'), and improvements in Remark~\ref{SIrem:add-improvements} off (`non-impr') vs on (`improved'). When the in-place case is not plotted, it means that overflow is happening. The underlying problem is the sum of two registers with $n_1=n_2=4$, $b_1=b_\mathrm{lead}=1$ (to make the `inpl' case comparable with `ext'), $n_\mathrm{out}$ varying on the abscissa, and (a) $b_2=1.4$, (b) $b_2=3$, (c) $b_2=0.25$, (d) $b_2 = 1/(2^5-1)$. Circuit depths are measured in the basis $[\mathtt{u3}, \mathtt{cx}]$ with the optimization level 3 of the qiskit transpiler. Algorithmic improvements reduce significantly the circuit depth, and sometimes also the error as in (a). In-place addition is generally more convenient than the counterpart, but (c) shows an exception. The case (d) is a pathological example demonstrating dramatic improvement on the depth.}
    \label{SIfig:improvements}
\end{figure}

\clearpage
\bibliography{main}{}
\bibliographystyle{ieeetr}

%% file: Figures/YapproxX.tex
\includegraphics{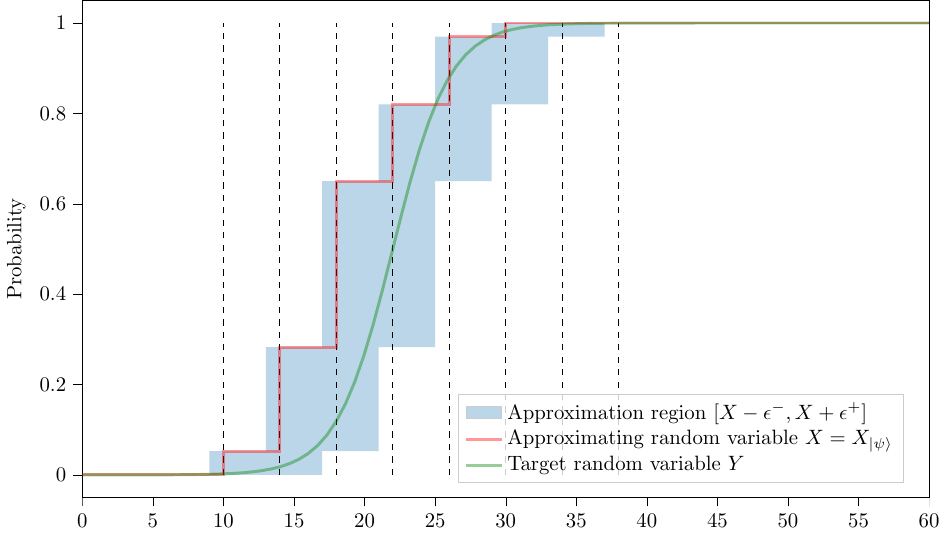}

%% file: Figures/overflow-uniform.tex
\includegraphics{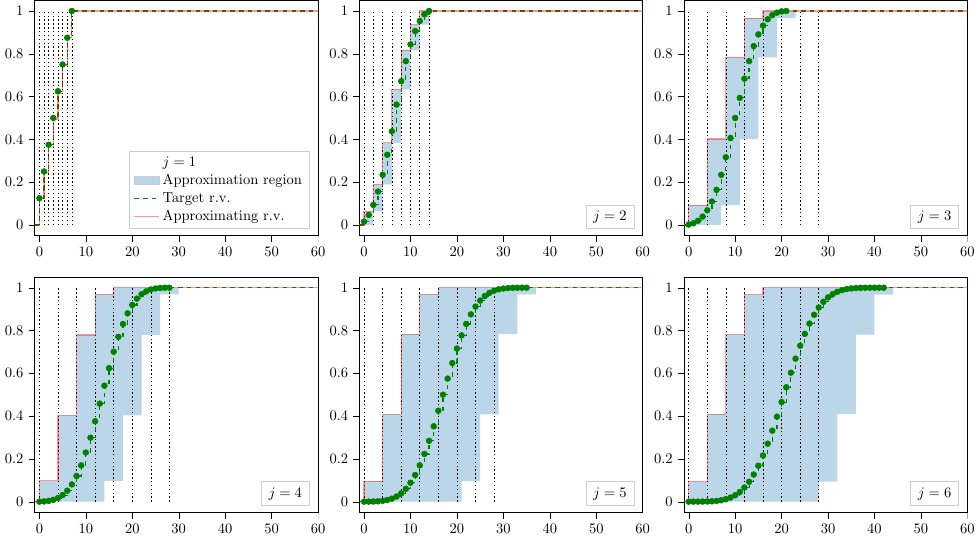}

%% file: Figures/blead.tex
\includegraphics{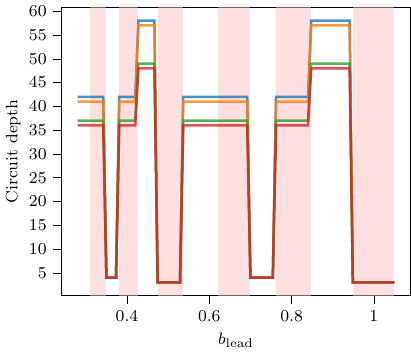}
\hfill
\includegraphics{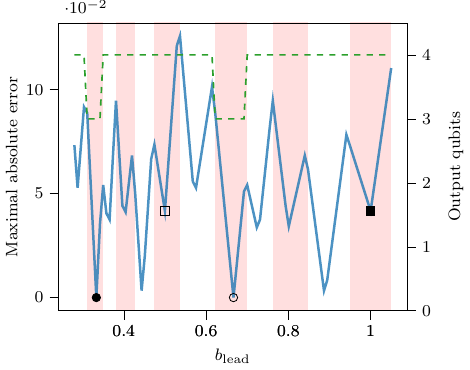}

%% file: Figures/blead2.tex
\includegraphics{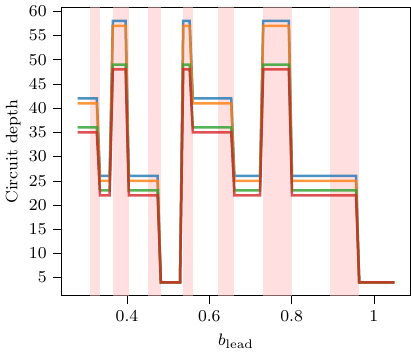}
\hfill
\includegraphics{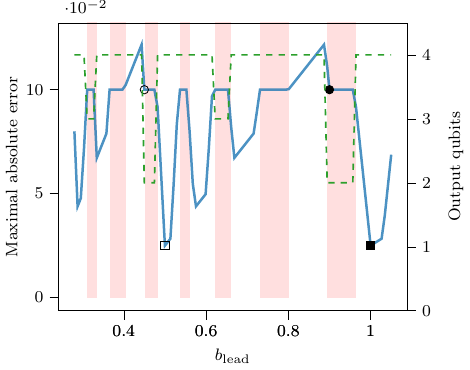}

%% file: Figures/blead-summary.tex
\centering
\fbox{
\begin{minipage}{.7\textwidth}
\vspace{1mm}
\centering
\includegraphics{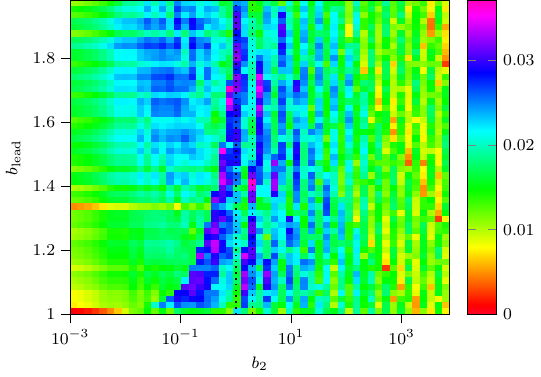}
\\[5mm]
(a) \hspace{4mm}
\includegraphics{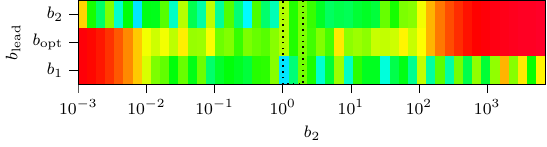}
\end{minipage}
}
\\[1mm]
\fbox{
\begin{minipage}[c][75mm]{.46\textwidth}
\centering
\vspace{1mm}
\includegraphics{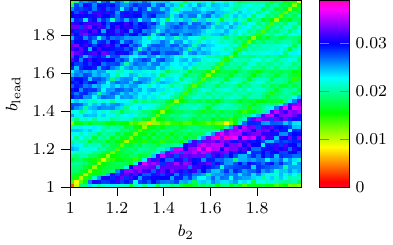}
\vfill
(b)
\includegraphics{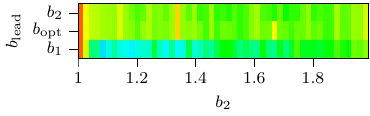}
\end{minipage}
}
\hfill
\fbox{
\begin{minipage}[c][75mm]{.46\textwidth}
\centering
\includegraphics{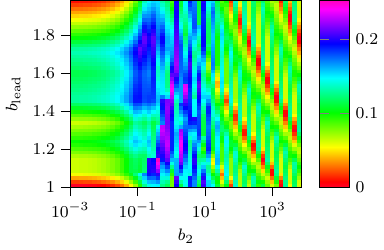}
\vfill
(c)
\includegraphics{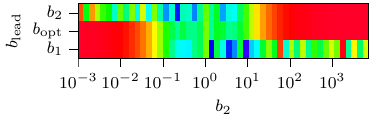}
\end{minipage}
}

%% file: Figures/overflow-step-nooffset.tex
\includegraphics{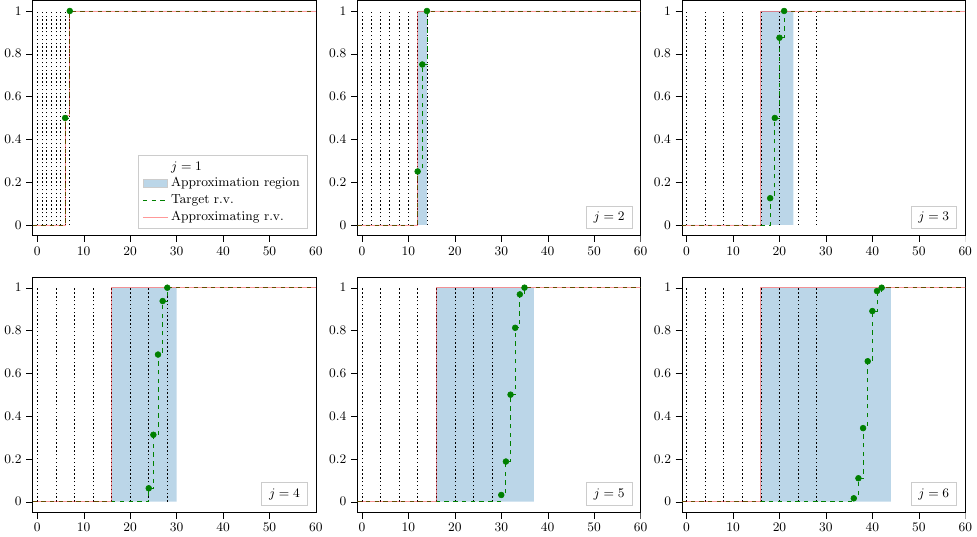}

%% file: Figures/overflow-step-offset.tex
\includegraphics{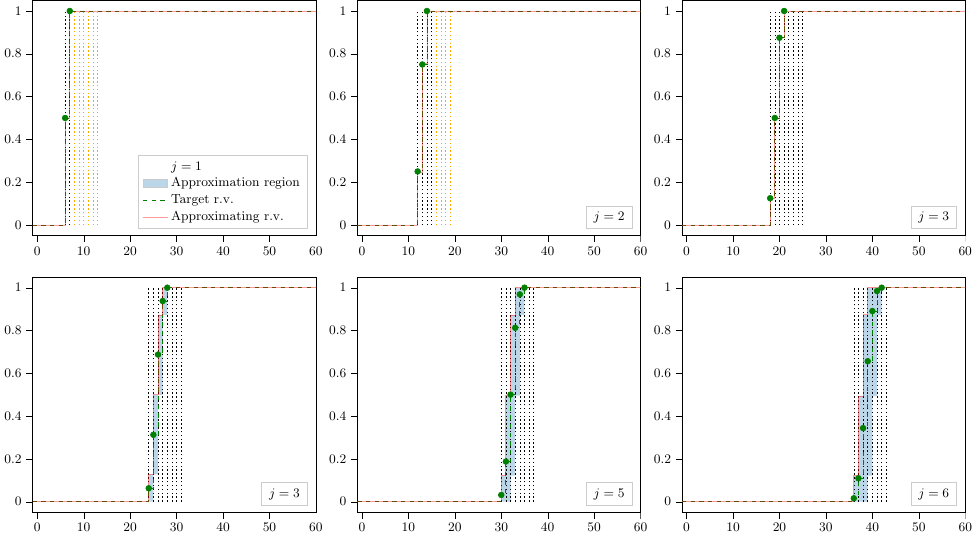}

%% file: Figures/improvements.tex
\raisebox{5mm}{{(a)}}
\includegraphics{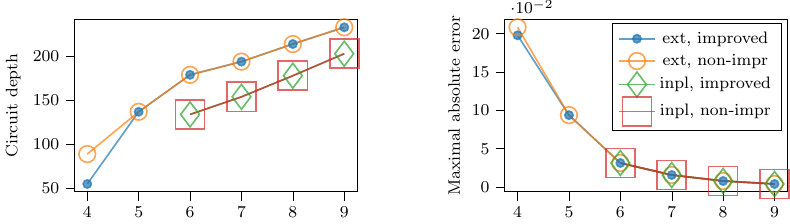}

\raisebox{5mm}{{(b)}}
\includegraphics{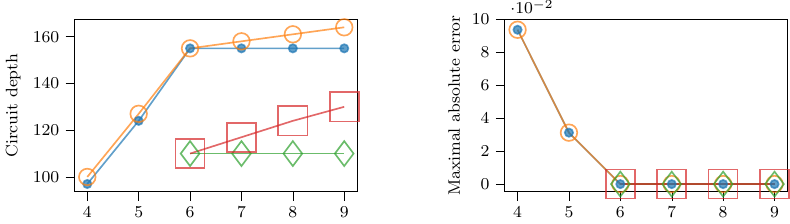}

\raisebox{5mm}{{(c)}}
\includegraphics{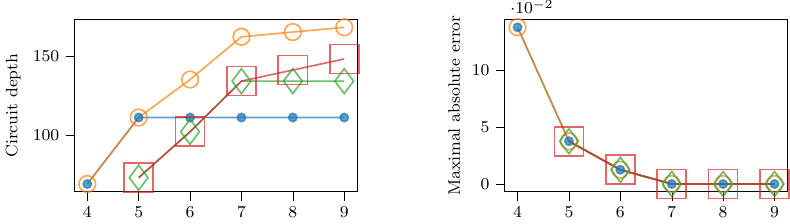}

\raisebox{5mm}{{(d)}}
\includegraphics{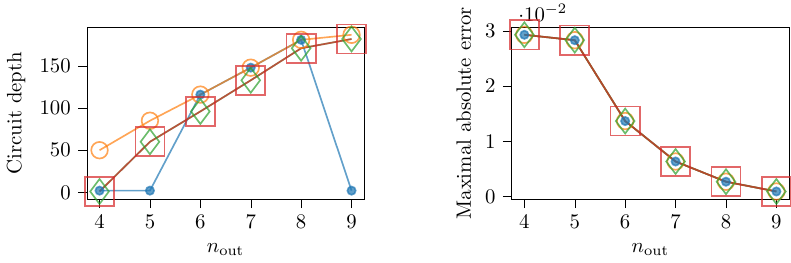}